\documentclass[journal,letterpaper]{IEEEtran}

\usepackage{amsfonts}       
\usepackage{booktabs}
\usepackage{amsmath}
\usepackage{amsthm}
\usepackage{algorithm}
\usepackage{algpseudocode}
\usepackage{comment}
\usepackage{bbm}
\usepackage{graphicx}
\usepackage{float}
\usepackage{color}
\usepackage{hyperref}
\usepackage{subfigure}
\usepackage{cite}
\usepackage{enumitem}
\usepackage{dsfont}
\allowdisplaybreaks
\newtheorem{corollary}{Corollary}

\newtheorem{theorem}{Theorem}
\newtheorem{assumption}{Assumption}
\newtheorem{lemma}{Lemma}


\algnewcommand{\LeftComment}[1]{\(\triangleright\) #1}

\begin{document}

\title{Multi-player Multi-armed Bandits with Collision-Dependent Reward Distributions}

\author{Chengshuai~Shi and Cong~Shen
\thanks{The work was supported in part by the US National Science Foundation (NSF) under Grant CNS-2002902 and ECCS-2029978, and a Virginia Commonwealth Cyber Initiative (CCI) cybersecurity research collaboration grant.}
\thanks{The authors are with the Charles L. Brown Department of Electrical and Computer Engineering, University of Virginia, Charlottesville, VA 22904, USA. E-mail: \texttt{\{cs7ync,cong\}@virginia.edu}.}
}

\maketitle

\begin{abstract}

We study a new stochastic multi-player multi-armed bandits (MP-MAB) problem, where the reward distribution changes if a collision occurs on the arm. Existing literature always assumes a zero reward for involved players if collision happens, but for applications such as cognitive radio, the more realistic scenario is that collision \emph{reduces} the mean reward but not necessarily to zero. We {focus on the more practical no-sensing setting where players do not perceive collisions directly}, and propose the Error-Correction Collision Communication (EC3) algorithm that models implicit communication as a reliable communication over noisy channel problem, for which random coding error exponent is used to establish the optimal regret that no communication protocol can beat. Finally, optimizing the tradeoff between code length and decoding error rate leads to a regret that approaches the centralized MP-MAB regret, which represents a natural lower bound. Experiments with practical error-correction codes on both synthetic and real-world datasets demonstrate the superiority of EC3. In particular, the results show that the choice of coding schemes has a profound impact on the regret performance.

\end{abstract}

\begin{IEEEkeywords}
Multi-armed bandits (MAB); Multi-player bandits; Regret analysis; Error-correction coding.
\end{IEEEkeywords}

\section{Introduction}
\label{sec:intro}

The multi-armed bandits (MAB) problem is a simple yet powerful model of  sequential decision experiments with an exploration-exploitation tradeoff \cite{Bubeck:2012,Wang2018tsp,Shen2019}. In addition to its practical utility, research on MAB has contributed several impactful principles (e.g., Thompson sampling \cite{Thompson1933}, Gittins index \cite{Gittins1979}, and Optimism in the Face of Uncertainty \cite{Auer:2002,Bubeck:2012}) that prove to be useful in other related fields as well.

The multi-player version of MAB problems, in which multiple players simultaneously play the bandit game in a \textit{fully decentralized} fashion, has sparked significant interest recently \cite{gai2014distributed,tekin2012online,rosenski2016multi,besson2017multi,boursier2018sic,boursier2019practical}. Almost all of the existing works on multi-player multi-armed bandits (MP-MAB) assume that {collisions eliminate the reward on the arm}, i.e., if more than one player select the same arm simultaneously, all involved players receive zero reward. This assumption has been largely based on the application of Cognitive Radio (CR) \cite{liu2010distributed,anandkumar2011distributed,avner2014concurrent}, where it is argued that if multiple secondary users simultaneously attempt to access a channel, the mutual interference causes all communications to fail, leading to a zero reward. In practical CR systems, this is indeed the case for the carrier-sense multiple access with collision avoidance (CSMA/CA) protocol, which is widely used in WiFi \cite{CSMA2012}.

As the communication technologies advance, however, this collisions-lead-to-failed-communication principle may not always hold in modern wireless systems, in which {adaptive coding and modulation} allows reduced-rate communications to be successful even in the presence of multi-user interference \cite{SesiaLTE}. In other words, for some advanced communication protocols, a better model is to assume that \emph{collisions only impair rewards} (i.e., reduced-rate but still successful communication) instead of leading to no rewards (i.e., completely failed communication). This new communication paradigm for CR thus motivates the study of a different MP-MAB model in which collisions lead to impaired average rewards, but not necessarily zero \cite{bande2019multi}. 

As it turns out, this seemingly simple change of the model adds significant technical challenges to solving the new MP-MAB problem. In the previous model, collisions unequivocally generate  \textit{constant} rewards (i.e., zero) for all involved players, which provides unmistakable, non-random information. As a result, the only uncertainty comes from the original bandit model (the reward of each arm follows an unknown distribution and the player may only receive samples of this distribution) when there is no collision. In this new model, however, the boundary between collision and non-collision is very murky: collisions only reduce the \textit{mean} reward but the actual reward is still a random variable generated from this reduced-mean distribution, which adds significant uncertainty to the decision maker.

Another limitation of the prior research is that it mostly considers the \textit{collision-sensing} model, in which collision events (two or more players simultaneously pull the same arm) are perfectly known to the involved players \cite{rosenski2016multi,liu2010distributed,besson2017multi,boursier2018sic}. Nevertheless, in practice collisions are difficult to detect, and the more realistic scenario corresponds to the \textit{no-sensing} model in which players can only observe the final reward realizations. As has been discussed in the literature \cite{besson2017multi,bonnefoi2017multi,boursier2018sic}, {the no-sensing model} is  regarded as {a more difficult MP-MAB problem than the collision-sensing model}: not only are the players not allowed to communicate with each other, but they also cannot access the so-called collision indicator, i.e., whether there exist other players who pull the same arm.  In other words, all a player can do is, at each time slot, chooses an arm to pull and then observes a (final) reward signal from the arm. Existing literature has reported limited progress for no-sensing MP-MAB, and the known results typically have large regret gaps to the centralized MP-MAB with complete information.\footnote{{See Section \ref{sec:related} for a detailed literature review.}}

In this work, we study MP-MAB with collision-dependent reward distributions {and focus on the no-sensing model}. In particular, we claim the following contributions. 

\begin{itemize} [leftmargin=*]\itemsep=0pt

\item We propose a novel algorithm, called Error-Correction Collision Communication (EC3), that is developed with two novel tools that are traditionally outside the toolbox of multi-armed bandits research: reliable communication over noisy channels, and error-correction coding.
We show that these are powerful tools that lead to novel decentralized MP-MAB algorithms based on error-correction coding, and allow us to analyze their regret with the fundamental limit of the noisy communication channel.

\item  Under this framework, {we adopt error-correction coding to transmit sample reward means with adaptive quantization lengths between players}, which results in a communication regret that does not dominate the total regret when the coding rate is chosen properly. The asymptotic regret not only provides a strong performance guarantee for the new model with collision-dependent rewards, but also approaches the natural lower bound of centralized MP-MAB.

\item Experiments with practical error-correction codes on both synthetic and real-world datasets demonstrate the superiority of EC3. In particular, the results show that the choice of coding schemes has a profound impact on the regret performance.

\end{itemize}

The remainder of this paper is organized as follows. Section~\ref{sec:related} gives a brief summary of the related literature. The decentralized MP-MAB problem with the collision-dependent reward model is presented in Section~\ref{sec:prob}. We then first describe the implicit communication protocol design in Section~\ref{sec:info}, and then present the complete EC3 algorithm in Section~\ref{sec:alg}. Regret analysis is given in Section~\ref{sec:theory} and extensions are discussed in Section~\ref{sec:modelext}. Experimental results are reported in Section~\ref{sec:exp}, followed by the conclusion of the paper in Section~\ref{sec:conc}.

\section{Related Work}
\label{sec:related}
\textbf{Collision-sensing MP-MAB:} Decentralized stochastic MP-MAB problems, introduced by \cite{liu2010distributed} and \cite{anandkumar2011distributed}, disallow explicit communications among players. {The collision-sensing model is the most widely studied MP-MAB problem.} As the (single-player) stochastic MAB problem is well understood, a natural idea for collision-sensing stochastic MP-MAB is to {adopt mature single-player algorithms while avoiding} collisions for as much as possible. Examples include Explore-then-Commit \cite{rosenski2016multi}, UCB \cite{liu2010distributed,besson2017multi} and $\epsilon$-greedy \cite{avner2014concurrent}. Especially, the Musical Chairs algorithm \cite{rosenski2016multi} achieves {a constant regret with a high probability without any player pre-agreement (which can be easily turned into a logarithmic regret in expectation)}. {These algorithms all have sublinear regrets but these regrets are always  worse than the centralized MP-MAB regret (which is a natural lower bound for the decentralized setting) by a multiplicative factor $M$, where $M$ is the number of players. This gap is fundamental because the collision-avoidance approach is confined to letting players play $M$ separate single-player MAB games instead of one coordinated game.}

{As opposed to \emph{avoiding} collisions, recent advances in decentralized MP-MAB show that performance improvement can be achieved by purposely \emph{exploiting} collisions to communicate among players. One representative approach is the SIC-MMAB algorithm in \cite{boursier2018sic}, which is closely related to our work. It enables players to use forced collisions to communicate collected statistics and coordinately play the MAB game. With a careful coordination in the collision-sensing setting, SIC-MMAB leads to a regret that can approach the centralized lower bound. Similar ideas of using forced collisions are also utilized in \cite{avner2016multi,darak2019multi, tibrewal2019distributed} to communicate arm statistics or signal the need of arm switches with heterogeneous reward distributions for each player. More recent works have further explored and enhanced the collision-communication idea. For example, KL-UCB \cite{garivier2011kl} is incorporated in \cite{proutiere2019optimal}, and the heterogeneous setting is investigated in \cite{boursier2019practical, magesh2019multi}.  This idea has also been proved effective in the adversarial MP-MAB setting \cite{alatur2019multi}.}

\textbf{No-sensing MP-MAB:} {The no-sensing model, on the other hand, is much more challenging. Most state-of-the-art works, including Adapted-SIC-MMAB and SIC-MMAB2 in \cite{boursier2018sic} and EC-SIC in \cite{Shi2020aistats}, utilize the key idea of SIC-MMAB for the settings with zero rewards upon collisions. Specifically, Adapted-SIC-MMAB and SIC-MMAB2 transmit the same information repeatedly\footnote{This can be viewed as a primitive form of repetition coding, which will be discussed later.}, and utilize the unmistakable information from non-zero rewards (i.e., no collision) for reliable communication. Nevertheless, Adapted-SIC-MMAB has a dominating communication loss of order $O(\log(T)\log^2(\log(T)))$. To reduce the communication loss, SIC-MMAB2 only shares indices of accepted or rejected arms instead of arm statistics. However, its regret is still worse than the centralized lower bound by a multiplicative factor $M$. EC-SIC \cite{Shi2020aistats} uses the idea of coding but also strictly relies on the assumption of zero rewards upon collisions (which results in a much simplified communication scenario as discussed in Section \ref{sec:info}). Also, the proposed EC3 does not require prior knowledge of the sub-optimality gap, which is necessary for \cite{Shi2020aistats}. Meanwhile, some very recent works \cite{bubeck2020cooperative,bubeck2020coordination} have made progress on instance-independent regrets for the no-sensing setting, while the main focus of this paper is on instance-dependent regrets.

\textbf{Other related works:} There is also another line of research in MP-MAB where there are no collisions \cite{landgren2016distributed,landgren2018social,martinez2019decentralized,wang2019distributed}, but they are under a framework which is completely different from this work and thus are out of the scope of this paper.}

\section{MP-MAB Model and Problem Formulation}
\label{sec:prob}
In this section, we formulate the problem of decentralized no-sensing MP-MAB with collision-dependent rewards. The model contains a known number of arms $K$ but an unknown number of players $M\leq K$. The players are labeled from $1$ to $M$. The time horizon $T$ is known to the player. There is no {explicit} communication between players, but they are assumed to have a common knowledge of the time (e.g., shared clock for time synchronization). Without collisions, the reward $X_k^A{\geq 0}$ of arm $k\in[K]$ is sampled from distribution $A_k$ with $\mathbb{E}[X_k^A]=\mu[k]\in [0,1]$. Upon collisions, arm $k\in[K]$ is assumed to provide a reward $X_k^B{\geq 0}$ sampled from distribution $B_k$ with $\mathbb{E}[X_k^B]=\nu[k]\in[0,1]$. Both reward distributions, i.e., $A_k$ and $B_k$, are assumed to be $\sigma$-subgaussian.\footnote{This is a widely adopted assumption in MAB literature \cite{Bubeck:2012}. In particular, the commonly used assumption that the rewards are bounded in interval $[0,1]$ satisfies this property with $\sigma=\frac{1}{2}$.} {In the case of $P(X_k^B=0)=1$ for all arm $k\in[K]$, the model degenerates to the existing model with collision-eliminated rewards (i.e., zero reward for collision)} \cite{liu2010distributed,anandkumar2011distributed,rosenski2016multi,boursier2018sic}. {Under this framework}, the reward for player $m$ choosing arm $\pi_m(t)$ at time $t$ {can be expressed as}:
$$
r_m(t):=\underbrace{X^A_{\pi_m(t)}(t)(1-\eta_{\pi_m(t)}(t))}_{\text{no collision}}+\underbrace{X^B_{\pi_m(t)}(t)\eta_{\pi_m(t)}(t)}_{\text{collision}},
$$
where $\eta_{\pi_m(t)}(t)$ is the collision indicator defined by
$$
\eta_{k}(t):=\mathds{1}\{|C_{k}(t)|>1\},
$$
with $C_k(t):=\{m\in[M]|\pi_m(t)=k\}$.

In the \textit{collision-sensing} setting, both $r_m(t)$ and $\eta_{\pi_m(t)}(t)$ are known to the players. In the \textit{no-sensing} setting, only the final reward $r_m(t)$ is available to the players; no information on $\eta_{\pi_m(t)}(t)$ is available. With collision-dependent rewards, any reward within the support overlap of $A_k$ and $B_k$ can indistinguishably come from collision or non-collision. {Note that if the support of $A_k$ and $B_k$ do not overlap for all arms $k\in [K]$, then the no-sensing and collision-sensing problems are equivalent, since the reward itself conveys the collision information unambiguously.}

The following assumption is made on the MAB model to characterize the collision-dependent reward setting.\footnote{We note that several assumptions of this model, including Assumption \ref{aspt:bounds}, can be relaxed at the expense of unnecessarily complicating the notation, the proposed solution, and the analyses. These model extensions and assumption relaxations are deferred to Section~\ref{sec:modelext}, where the applicability becomes clear after the EC3 algorithm and its analysis are presented.}

	\begin{assumption}\label{aspt:bounds}
	A positive lower bound $\mu_{\min}$ and a {non-negative} upper bound $\nu_{\max}$ exist and are known to all players: $0{\leq} \max_{k\in[K]} \nu[k]\leq \nu_{\max}<\mu_{\min}\leq \min_{k\in[K]}\mu[k]$.
	\end{assumption}

Assumption \ref{aspt:bounds} formalizes the intuition that collisions reduce the mean reward (e.g., the average signal quality decreases if multiple users share the channel). {We immediately emphasize that this only applies to the {mean} reward, but not necessarily to the {\em realized} reward. The support of collision rewards and no-collision rewards may still overlap.} {Similar assumptions about $\mu_{\min}$ have been commonly used in the literature of no-sensing MP-MAB \cite{lugosi2018multiplayer,boursier2018sic} with zero rewards for collisions, where $\nu_{\max}$ is naturally $0$.}

The notation of regret can be generalized from the single-player setting \cite{Lai:1985} to MP-MAB, with respect to the best allocation of arms, as:
$$
\small
R(T):=T\sum_{m\in[M]}\mu_{(m)}-\mathbb{E} \left [\sum_{t=1}^T\sum_{m\in[M]}r_m(t) \right ],
$$
where $\mu_{(j)}$ is $j$-th ordered statistics of $\mu$, i.e. $\mu_{(1)}\geq \mu_{(2)}\geq\cdots\geq \mu_{(K)}$. A similar notion is also applied to the impaired rewards upon collisions as $\nu_{(j)}$. The goal of the players is to minimize the regret $R(T)$, which equivalently maximizes the obtained rewards.

The collision-dependent MP-MAB model {has not been well studied} for both collision-sensing and no-sensing {settings. However, with the perfectly known collision indicator in the collision-sensing setting, collision-dependent reward does not fundamentally change the problem from the model with zero rewards for collisions. Thus, we focus on explaining the algorithm design and the companion regret analysis for the more challenging no-sensing setting, and highlight the necessary modifications when applied to collision-sensing in Section \ref{subsec:sensing}.}

\section{Implicit Communication for Collision-Dependent MP-MAB}\label{sec:info}

One of the core ingredients of EC3 is the implicit communication protocol via purposely instigated collisions. We detail this idea for the collision-dependent MP-MAB model, and then present the overall EC3 in the next section, in which this communication protocol is a key component.

\subsection{Implicit Communication}

The implicit communication protocol for collision-dependent MP-MAB is a generalization of the methods in \cite{boursier2018sic,Shi2020aistats}, and is illustrated in Fig.~\ref{fig:comexamp} for a two-player two-arm example. The key idea is that both players, based on some prior synchronization mechanism to be explained later, take predetermined turns (Player 1 first transmits, and then Player 2 transmits) to communicate by having the ``receive'' player sample her own communication arm ({without loss of generality we assume the same index of player and her communication arm, i.e., arm $m$ for player $m$}) and the ``send'' user either pull (create collision; bit $1$) or not pull (create no collision; bit $0$) receiver's communication arm to convey one bit information. Doing so sequentially over a fixed duration of time slots would convey a bit sequence $\boldsymbol{b}$ of the same length between the two players, \textit{if the collision indicator is perfectly known}. For a player that is not engaged in the current peer-to-peer communication, she keeps pulling her communication arm to avoid interrupting other ongoing communications. The general send and receive protocols {for an arbitrary player $m$ to communicate a message $\boldsymbol{b}$ of length $l_b$ to player $n$ are given in Algorithm \ref{alg:send} and Algorithm \ref{alg:receive}, respectively.} In this way, players can share individual arm statistics with each other so that they can explore the arms cooperatively.

\begin{figure}[thb]
	\vspace{-0.1in}
	\setlength{\abovecaptionskip}{0.2pt}
	\centering
	\includegraphics[width=0.8 \linewidth]{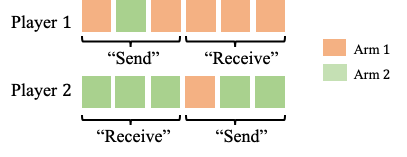}
	\caption{Illustration of implicit communication for MP-MAB.}
	\label{fig:comexamp}
\end{figure}

\begin{algorithm}[thb]
	\small
	\caption{\texttt{Send}}
	\label{alg:send}
	\begin{algorithmic}[1]
		\Require sender index $m$, receiver index $n$, bit sequence to send $\boldsymbol{b}$
		\State $l_b\gets$length($\boldsymbol{b}$)
		\For{$l=1,...,l_b$}
		\If{$\boldsymbol{b}[l]=1$}
		\State Pull arm $n$\Comment{\textit{Collision: send bit $1$}}
		\Else
		\State Pull arm $m$\Comment{\textit{No collision: send bit $0$}}
		\EndIf
		\EndFor
	\end{algorithmic}
\end{algorithm}

\begin{algorithm}[thb]
	\small
	\caption{\texttt{Receive}}
	\label{alg:receive}
	\begin{algorithmic}[1]
		\Require receiver index $n$
		\For{$l=1,...,l_b$}
		\State Pull arm $n$\Comment{\textit{Pull her own communication arm}}
		\If{$\eta_{n}(t)=1$}
		\State {$\hat{\boldsymbol{b}}[l]\gets 1$}\Comment{\textit{Collision: receive bit $1$}}
		\Else
		\State {$\hat{\boldsymbol{b}}[l]\gets 0$}\Comment{\textit{No collision: receive bit $0$}}
		\EndIf
		\EndFor
		\Ensure  {received bit sequence $\hat{\boldsymbol{b}}$} 
	\end{algorithmic}
\end{algorithm}

\subsection{Reliable Communication with Error-Correction coding}\label{subsec:rel_comm}

In the no-sensing setting\footnote{{Some no-sensing papers \cite{bubeck2020cooperative,bubeck2020coordination} strictly assume no implicit communications. This work, however, does not make such assumption and focuses on leveraging implicit communications even in the no-sensing setting.}}, collisions cannot be perceived without error, which poses a challenge for implicit communication. In the previous literature where collisions always lead to zero rewards, any observed reward that is strictly larger than $0$ can be received as bit $0$ \textit{error-free}, which means that error can happen in only one direction ($0$-to-$1$ error) \cite{Shi2020aistats}. However, this is no longer true for collision-dependent rewards, where \textit{two-way errors} between bit $1$ and $0$ may happen. {This generalization results in that previous no-sensing solutions proposed in \cite{boursier2018sic,Shi2020aistats} cannot be trivially extended to the current collision-dependent reward model.} 

Luckily, having an error-free communication is not the ultimate goal -- the objective is to have sufficiently reliable communication to share arm statistics so that the regret caused by communication errors does not dominate the overall regret. With this in mind, we model no-sensing implicit communication as the \textit{reliable communication over noisy channels} problem \cite{CoverBook}, and leverage its fundamental limit \cite{gallager1968information} to establish the optimal regret scaling. Specifically, communications on arm $k$ is modeled as transmission over a binary channel in Fig.~\ref{fig:subgaussian}: bit $0$ (no collision) and bit $1$ (collision) are received as samples from two distinct $\sigma$-subgaussian distributions $A_k$ and $B_k$. More precise channel models can be used if additional knowledge of the distribution is available. For example, the additive white Gaussian noise (AWGN) channel is shown in Fig.~\ref{fig:awgn}, in which the noise follows a Gaussian distribution with mean zero and variance $\sigma^2$.

\begin{figure}[htb]
    \centering
    \setlength{\abovecaptionskip}{-0.1pt} 
	\begin{minipage}[t]{0.49\linewidth}
		\centering
		\includegraphics[width=0.9\textwidth]{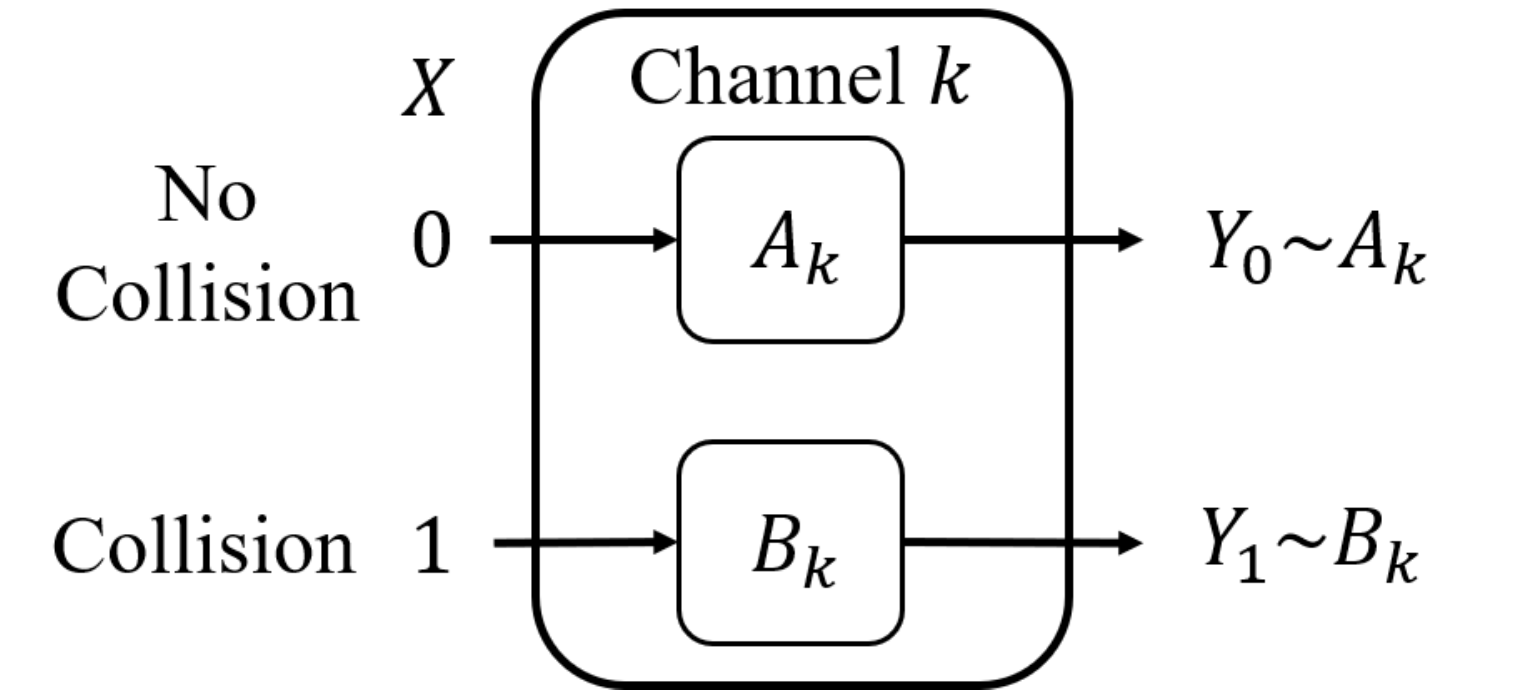}
		\caption{Binary channel}
		\label{fig:subgaussian}
	\end{minipage}
	\begin{minipage}[t]{0.49\linewidth}
		\centering
		\includegraphics[width=0.7\textwidth]{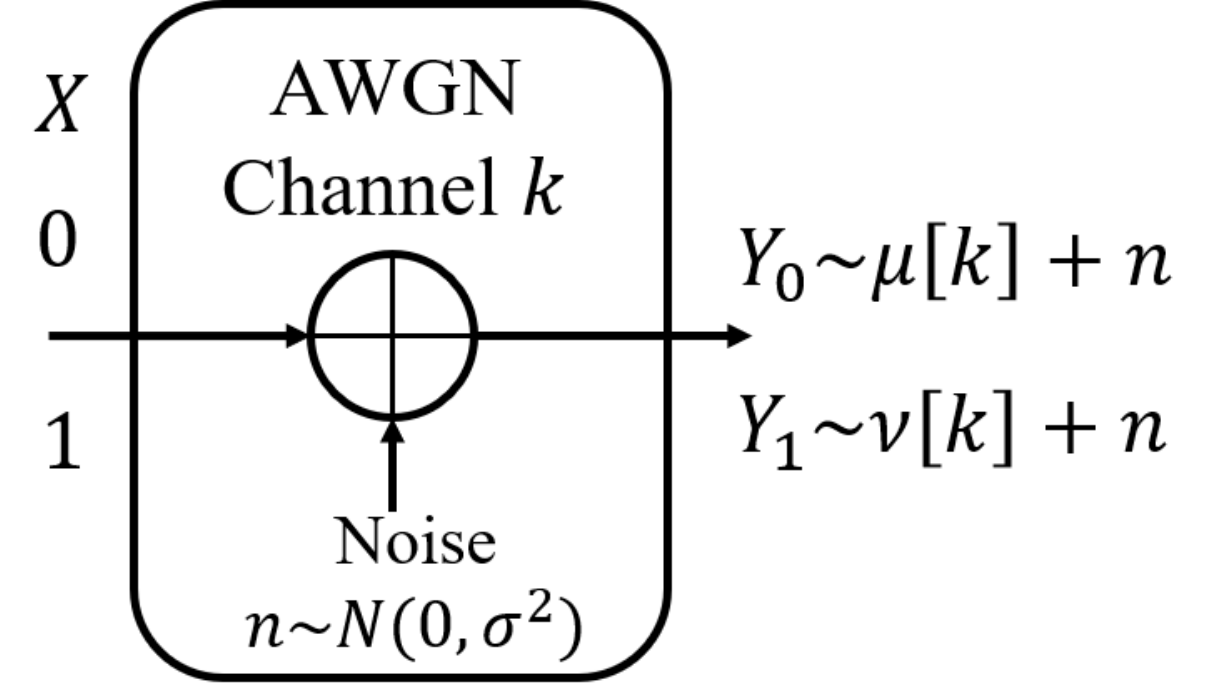}
		\caption{AWGN channel}
		\label{fig:awgn}
	\end{minipage}
	\vspace{-0.15in}
\end{figure}

To achieve reliable communications when transmissions are not error-free, coding is an effective technique. However, directly working on a particular coding scheme would lead to regret analysis that depends on the choice of the code, and raise the question whether a different choice may improve the result. We thus first approach the algorithm design from the optimal scheme that no code can beat, and then evaluate practical coding choices.  An important result of channel coding for communication over a noisy channel, known as the \textit{random coding error exponent} \cite{gallager1968information}, is fundamental to our problem.

\begin{theorem}\label{thm:error_exponent}
	For the binary memoryless channel described in Fig.~\ref{fig:subgaussian}, if the coding rate $R_c<C$ where $C$ is the channel capacity, there exists a code of block length $N$ without feedback such that the error probability is bounded by
	\begin{equation}\label{eqn:error_exponent}
	    P_e\leq \exp[-N {E_r(R_c)}],
	\end{equation}
	where {$E_r(R_c)$} is the random coding error exponent that depends on the coding rate $R_c$ and the channel model.
\end{theorem}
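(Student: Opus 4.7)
The plan is to invoke Gallager's classical random-coding argument, adapted to a binary-input channel with continuous (subgaussian) output as depicted in Fig.~\ref{fig:subgaussian}. First I would fix an arbitrary input distribution $Q$ on $\{0,1\}$ and construct an ensemble by drawing every symbol of each of the $M = \lceil e^{N R_c}\rceil$ codewords i.i.d.\ from $Q$; decoding is maximum likelihood against the output density of $A_k$ (input $0$) or $B_k$ (input $1$). Averaging over both the random ensemble and a uniform choice of transmitted message decouples the problem into bounding the probability that any one of the $M-1$ competing codewords outscores the true one.

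Next I would bound this pairwise-error event via Gallager's inequality: for any $\rho \in [0,1]$, the standard trick converts a pairwise likelihood comparison into a product form, and raising to the power $\rho$ and applying Jensen's inequality lets the expectation over the random codebook factorize across the $N$ symbols. A union bound over the $M-1$ rival codewords then yields
\[
\bar{P}_e \;\leq\; \exp\bigl\{-N[E_0(\rho, Q) - \rho R_c]\bigr\},
\]
where $E_0(\rho,Q) = -\ln \int \bigl(\sum_{x\in\{0,1\}} Q(x)\, p(y \mid x)^{1/(1+\rho)}\bigr)^{1+\rho}\,dy$. Maximizing over $Q$ and $\rho \in [0,1]$ defines $E_r(R_c)$; standard properties of $E_0$ (namely $E_0(0,Q)=0$ and $\partial_\rho E_0 \big|_{\rho=0}=I(Q;P)$, together with concavity in $\rho$) imply $E_r(R_c)>0$ whenever $R_c<C$. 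Because the ensemble average satisfies the bound, at least one deterministic code in the ensemble does as well; an optional expurgation step retaining the better half of codewords converts this into a maximal-error bound at a rate loss $(\ln 2)/N$ that vanishes with $N$.

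The main obstacle in transporting the classical discrete-alphabet derivation to the present setting is the continuous output: the sums over $y$ become integrals against a reference measure, and one must confirm that $E_0(\rho,Q)$ is finite for the $\sigma$-subgaussian densities $A_k$ and $B_k$. For the AWGN specialization in Fig.~\ref{fig:awgn} the integral defining $E_0$ is an explicit Gaussian computation yielding a closed-form exponent. For the generic subgaussian case, bounding $p(y\mid x)^{1/(1+\rho)}$ by an exponential moment controlled by the subgaussian MGF, and verifying that the resulting integral converges, suffices to guarantee a strictly positive $E_r(R_c)$ throughout $R_c<C$, which is all the statement requires (existence of the code, not an explicit formula for its exponent).
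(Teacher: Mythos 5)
Your proposal is correct and is precisely the classical Gallager random-coding argument (random i.i.d.\ codebook, the $\rho$-trick bound $\bar{P}_e \leq \exp\{-N[E_0(\rho,Q)-\rho R_c]\}$, maximization over $Q$ and $\rho$, and positivity of $E_r$ below capacity), which is exactly what the paper relies on: it does not prove Theorem~\ref{thm:error_exponent} itself but defers entirely to \cite{gallager1968information}. Your added care about the continuous subgaussian output (replacing sums over $y$ by integrals and checking finiteness of $E_0$) is a sensible and correct adaptation, but it does not constitute a different route.
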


\begin{corollary}\label{col:coding_length}
    For the binary memoryless channel described in Fig.~\ref{fig:subgaussian}, denoting $N'(L)$ as the block length of the optimal code  without feedback that can transmit a message of length $L$ with an error probability $P_e\leq {1}/{T}$, it holds that  $N'(L)\leq \max\left\{\frac{L}{C-\varphi}, \frac{\log(T)}{E_r(C-\varphi)}\right\}$, where $\varphi \in (0,C)$ is an arbitrary constant.
\end{corollary}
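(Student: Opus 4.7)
The plan is to instantiate Theorem~\ref{thm:error_exponent} at the specific rate $R_c = C - \varphi$ and read off the block length required to drive the error probability below $1/T$. Concretely, I would define $N^\star := \max\{L/(C-\varphi),\ \log(T)/E_r(C-\varphi)\}$ and show that a code of block length $N^\star$ suffices to transmit an $L$-bit message with $P_e \le 1/T$; this immediately yields $N'(L) \le N^\star$ since $N'(L)$ is the minimum feasible block length.

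First I would verify admissibility of the rate. By construction $N^\star \ge L/(C-\varphi)$, so the actual rate used, $L/N^\star$, satisfies $L/N^\star \le C - \varphi < C$. Hence Theorem~\ref{thm:error_exponent} applies to any block-$N^\star$ code at this rate and yields $P_e \le \exp[-N^\star \, E_r(L/N^\star)]$.

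Next I would use the standard monotonicity property of the random coding error exponent, namely that $E_r(R_c)$ is non-increasing in $R_c$ on $[0,C)$ (a classical fact established in \cite{gallager1968information} from the concavity of the underlying Gallager function). Since $L/N^\star \le C-\varphi$, this gives $E_r(L/N^\star) \ge E_r(C-\varphi)$, so
\begin{equation*}
P_e \;\le\; \exp\!\bigl[-N^\star \, E_r(C-\varphi)\bigr].
\end{equation*}
Finally, because $N^\star \ge \log(T)/E_r(C-\varphi)$, the exponent satisfies $N^\star \, E_r(C-\varphi) \ge \log(T)$, so $P_e \le \exp[-\log(T)] = 1/T$, as required.

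The proof is essentially a direct substitution, so there is no serious obstacle; the only non-mechanical ingredient is the monotonicity of $E_r(\cdot)$, which I would quote rather than re-derive. A minor bookkeeping point is that strictly speaking $N^\star$ should be an integer, but this only changes the bound by at most one and does not affect the stated inequality once one reads the $\max$ as an upper bound on the smallest admissible integer block length.
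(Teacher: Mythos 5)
Your proof is correct and follows essentially the same route as the paper's: instantiate Theorem~\ref{thm:error_exponent} at block length $N^\star=\max\{L/(C-\varphi),\log(T)/E_r(C-\varphi)\}$, note the resulting rate is at most $C-\varphi$, and invoke the monotonicity of $E_r(\cdot)$ to conclude $N^\star E_r(L/N^\star)\geq\log(T)$, hence $P_e\leq 1/T$. You simply spell out the verification that the paper labels ``easy to verify,'' so there is nothing to add.
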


Corollary \ref{col:coding_length} suggests that to transmit an $L$-bit message over the binary channel, there exists an optimal coding scheme with length $N'(L)=O(\log(T))$ to achieve an error rate of $O\left({1}/{T}\right)$. This scaling of the error rate is critical for the subsequent regret analysis.  The detailed proof of Theorem \ref{thm:error_exponent} is in \cite{gallager1968information} while the proof of Corollary \ref{col:coding_length} can be found in Appendix \ref{app:coding_length}. For each arm $k$, we have a corresponding channel capacity $C_k$ characterized by $\mu_{k}$ and $\nu_k$. To ease the exposition and highlight the key algorithm design and regret analysis, we use Assumption~\ref{aspt:bounds} to provide a universal description of the worst-case $\mu_{\min}$ and $\nu_{\max}$ for all channels (i.e., arms), instead of using the arm-specific characterization. As a result, a common (worst-case) channel capacity $C$ and the corresponding error exponent $E_r(\cdot)$ can be derived that depend on $\mu_{\min}$ and $\nu_{\max}$. This universal description further leads to a unified upper bound of $N'(\cdot)$ as shown in Corollary~\ref{col:coding_length}, which is useful in the analysis. We emphasize that this simplification of Assumption~\ref{aspt:bounds} is not fundamental to the algorithm design and regret analysis; detailed discussions on its relaxation and corresponding enhancement can be found in Section~\ref{subsec:ext_aspt1}. In reality, different finite block-length codes may have different error-rate performances and {require different coding lengths to achieve an error rate smaller than ${1}/{T}$}; several practical codes are discussed in Section \ref{sec:modelext} and \ref{sec:exp}. Correspondingly, general functions\footnote{{Note that for the receiving protocol (i.e., Algorithm \ref{alg:receive}) in the no-sensing algorithm, players would use the reward values instead of the collision indicator to determine the received bits. Details can be founded in Appendix \ref{app:code} where specific coding schemes are discussed.}} \texttt{Send()}, \texttt{Receive()}, \texttt{Encoder()} and \texttt{Decoder()} are used in the algorithm descriptions without specifying the actual coding schemes.

\section{EC3 Algorithm Design}
\label{sec:alg}

The EC3 algorithm is compactly described in Algorithm~\ref{alg:overall}. It can be divided into four phases: initialization phase, exploration phase, communication phase, and exploitation phase, as illustrated in Fig.~\ref{fig:ec3struct}. The purpose of the initialization phase is to estimate the unknown $M$ with a sequential hopping protocol, for all players. Thus, if $M$ is known a priori, this phase can be skipped entirely. After the initialization phase, the EC3 iterates between exploration and communication phases, whose details are presented subsequently, {until all the optimal $M$ arms have been found and are allocated to players. After the allocation is made, each player fixates on her allocated arm and enters the exploitation phase where she always plays this arm until the end of $T$, with no further communication.}

\begin{figure}[thb]
	\vspace{-0.05in}
	\setlength{\abovecaptionskip}{0.2pt}
	\centering
	\includegraphics[width=0.8 \linewidth]{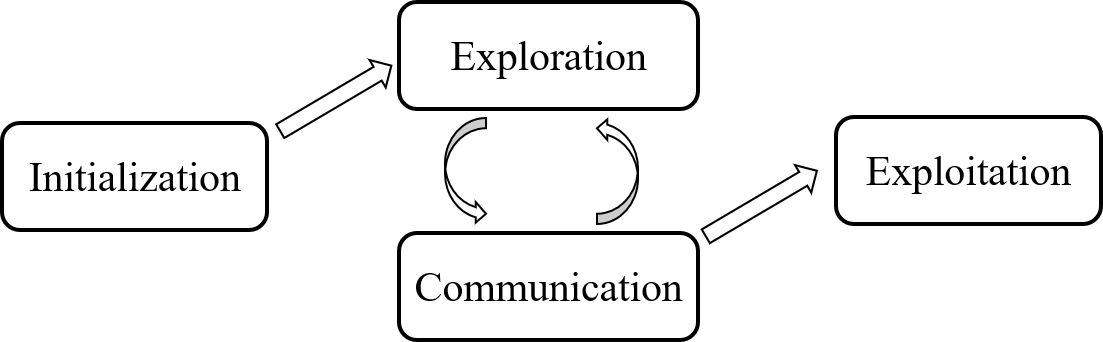}
	\caption{Algorithmic structure of EC3.}
	\label{fig:ec3struct}
	\vspace{-0.10in}
\end{figure}

It is essential to note the importance of synchronization in EC3, i.e., all players enter each phase at the same time (or at least with high probability). Also, in order to improve the communication efficiency, a server-client structure is adopted such that player $1$ serves as the leader and other players as followers.

\begin{algorithm}[thb]
	\small
	\caption{The EC3 Algorithm}
	\label{alg:overall}
	\begin{algorithmic}[1]
		\Require $T$, $K$, $\mu_{\min}$, $\nu_{\max}$, $\sigma$, player index $m$
		\State Initialize $p \gets 1$; active arms set $\mathcal{K}_p \gets [K]$; exploration sequence $E_p^m\gets \{m,m+1,...,K,1,...,m-1\}$; exploration time $T_{p-1}\gets 0$; accepted arms set $\mathcal{A}\gets \emptyset$;  rejected arms set $\mathcal{R}\gets\emptyset$
		\Statex \LeftComment{\textit{Initialization phase}}
		\State $M\gets$ \texttt{Estimate\_M} (); $M_p\gets M$
		\While{$|E_p^m|>1$} 
		\Statex \LeftComment{\textit{Exploration phase}}
		\State Pull arms following the order of the exploration sequence $E_p^m$  each for $2^p \lceil \sigma^2\log(T) \rceil$ times
		\State Update sample mean $\hat{\mu}_p^m[k], \forall k\in \mathcal{K}_p$
		\Statex \LeftComment{\textit{Communication phase}}
		\State $T_p \gets T_{p-1}+M_p2^p\lceil \sigma^2\log(T) \rceil $; $B_{T_p} \gets \sqrt{\frac{2\sigma^2\log(T)}{T_p}}$
		\State $Q_p \gets \left\lceil\log_2\left(\frac{1}{B_{T_p}}\right)\right\rceil$; $L_p \gets 1+Q_p$	
		\If {$m=1$}
		\State $(M_{p+1}, \mathcal{K}_{p+1}, E^1_{p+1})\gets\texttt{Leader} ()$
		\Else 
		\State $(M_{p+1}, \mathcal{K}_{p+1}, E_{p+1}^m)\gets\texttt{Follower} ()$
		\EndIf
		\State $p\gets p+1$
		\EndWhile
		\Statex \LeftComment{\textit{Exploitation phase}}
		\State Pull the only element in $E_p^m$ until the end of $T$ time slots 
	\end{algorithmic}
\end{algorithm}

\subsection{Initialization Phase}
The initialization phase takes the spirit from \cite{boursier2018sic} and is presented here for completeness. It consists of two parts. In the first part, each player $m$ fixates on arm $m$ and sends a coded bit $1$ to the leader on arm $1$ sequentially, and the leader keeps receiving potential bits $1$ from arm $2$ to arm $K$. Then, the leader transmits the estimated $M$ to all the followers. After this phase, all the players are aware of the number of players $M$ and Lemma~\ref{lem:init_regret} guarantees that this estimation is correct with a high probability. The initialization algorithm (\texttt{Estimate\_M}) for the leader is shown in Algorithm \ref{alg:init}; refer to \cite{boursier2018sic} for more details.

\begin{algorithm}[thb]
	\caption{\texttt{Estimate\_M (Leader)}}
	\small
	\label{alg:init}
	\begin{algorithmic}[1]
		\State Initialize $M\gets 1$
		\For {$k=2,...,K$}
		\State $G\gets\texttt{Decoder}(\texttt{Receive}(k, \text{coded bit $1$ or $0$ signal}))$
		\Statex\Comment{\textit{Receive signal from the potential player on arm $k$}}
		\If {$G=1$}
		\State $M\gets M+1$
		\EndIf
		\EndFor
		\For {$m=2,...,M$}
		\State \texttt{Send}$(1, m, \texttt{Encoder}(M))$\Comment{\textit{Send the estimated $M$}}
		\EndFor
	\end{algorithmic}
\end{algorithm}

\subsection{Exploration Phase}
{In the exploration phases, players are categorized as active or inactive. The \textit{active players} are those who have not been allocated with a specific arm yet, and they are in charge of exploration. The \textit{inactive players} have already been allocated with previously declared optimal arms, and they fixate on those arms. In the $p$-th exploration phase, we define $\mathcal{M}_p$ as the set of active players, whose cardinality is $M_p$. Similarly, \textit{active arms} $\mathcal{K}_p$, with cardinality $K_p$, are defined as the set of arms that have not been decided to be optimal or sub-optimal yet.}

{In phase $p$, the active arms are ordered according to the player index $m$ in the exploration sequence $E_p^m$ for all the active players $m\in \mathcal{M}_p$. Following the order in $E_p^m$, active players sequentially hop among the active arms for total $K_p2^p\lceil\sigma^2\log(T)\rceil$ time steps.  Thus, each active arm is pulled $2^p\lceil\sigma^2\log(T)\rceil$ times by each active player. For inactive players, their exploration sequences are of the same length as the ones for active players but consist of only the arm indices on which they have fixated.} With player synchronization {and distinct exploration sequences}, these explorations are collision-free and sample means of no-collision distributions are collected by the active players. {We denote the sample mean collected by player $m$ for arm $k$ after phase $p$ as $\hat{\mu}_{p}^m[k] $. }

{We note that the choice of exploration length $K_p2^p\lceil\sigma^2\log(T)\rceil$ in EC3 is expanded by a factor of $\lceil\sigma^2\log(T)\rceil$ compared to SIC-MMAB in \cite{boursier2018sic}. This expansion is crucial in achieving a sublinear regret in the no-sensing setting, as it decreases the overall number of the communication phases (see Section~\ref{subsec:comm}) while maintaining their efficiency. With this exploration protocol, any active arm $k\in [K_p]$ is pulled $T_p = \sum_{m=1}^M T^p_m = \sum_{q=1}^p M_q2^q\lceil\sigma^2\log(T)\rceil$ times by all the players up to phase $p$, where $T^p_{m}$ denotes the number of pulls that player $m$ (active or inactive) has performed on this active arm $k$ up to phase $p$.}

\begin{algorithm}[htb]
	\small
	\caption{\texttt{Leader}} 
	\label{alg:comm_leader}
	\begin{algorithmic}[1]
		\Statex \LeftComment{\textit{Receive and update arm statistics}}
		\For {$i=2,...,M$}
		\State $\forall k\in \mathcal{K}_p, \bar{\mu}_p^i[k]\gets \texttt{Decoder}(\texttt{Receive}(i, \text{coded } \bar{\mu}_p^i[k]))$
		\EndFor
        \State $\forall k\in \mathcal{K}_p, \bar{\mu}_p[k]\gets \sum_{i=1}^M\bar{\mu}^i_p[k]\cdot T_p^i/T_p$
		\Statex \LeftComment{\textit{Arm acceptation and rejection}}
		\State  $\mathcal{R}_p\gets$ all arms $k\in \mathcal{K}_p$ satisfying $$|\{j\in\mathcal{K}_p|\bar{\mu}_p[j]-2B_{T_p}\geq \bar{\mu}_p[k]+2B_{T_p}\} |\geq M_p$$ 
		\State $\mathcal{A}_p\gets$ all arms $k\in \mathcal{K}_p$ satisfying $$| \{j\in\mathcal{K}_p|\bar{\mu}_p[k]-2B_{T_p}\geq \bar{\mu}_p[j]+2B_{T_p}\} |\geq K_p-M_p$$
		\Statex \LeftComment{\textit{Transmit accepted and rejected arms}}
		\State \texttt{Send}$(1, i, \texttt{Encoder}\left(\{|\mathcal{A}_p|,|\mathcal{R}_p|\}\right))$ for $i=2,..., M$
		\State \texttt{Send}$(1, i,  \texttt{Encoder}(\{\mathcal{A}_p, \mathcal{R}_p\}))$ for $i=2,..., M$
		\Statex \LeftComment{\textit{Determine exploration set for the next phase}}
		\State $\mathcal{A}\gets \mathcal{A}\cup\mathcal{A}_p$; $\mathcal{R}\gets \mathcal{R}\cup\mathcal{R}_p$; $\mathcal{K}_{p+1}\gets \mathcal{K}_p\backslash(\mathcal{A}_p\cup\mathcal{R}_p)$
		\If {$M\leq |\mathcal{A}|$}  
		\State $E_{p+1}^1 \gets \{\mathcal{A}[M]\}$
		\Else \State $M_{p+1}\gets M-|\mathcal{A}|$
		\State $E_{p+1}^1 \gets \{\mathcal{K}_{p+1}[1],...,\mathcal{K}_{p+1}[K_{p+1}]\}$
		\EndIf
	\end{algorithmic}

\end{algorithm}
\begin{algorithm}[htb]
	\small
	\caption{\texttt{Follower}} 
	\label{alg:comm_follower}
	\begin{algorithmic}[1]
		\Statex \LeftComment{\textit{Transmit arm statistics}}
		\State Quantize $\hat{\mu}_p^m[k]$ by $L_p$ bits into $\bar{\mu}_p^m[k]$, $\forall k\in \mathcal{K}_p$
		\State $\forall k\in \mathcal{K}_p$, \texttt{Send}$(m, 1, \texttt{Encoder}(\bar{\mu}_p^m[k]))$
		\Statex \LeftComment{\textit{Receive accepted and rejected Arms}} 
		\State $\{N_{\text{a}},N_{\text{r}}\}\gets \texttt{Decoder}(\texttt{Receive}(m, \{\text{coded } |\mathcal{A}_p|,|\mathcal{R}_p|\}))$
		\State $\{\mathcal{A}_p,\mathcal{R}_p\}\gets\texttt{Decoder}(\texttt{Receive}(m, \{\text{coded } \mathcal{A}_p,\mathcal{R}_p\}))$
		\Statex \LeftComment{\textit{Determine exploration set for the next phase}}
		\State $\mathcal{A}\gets \mathcal{A}\cup\mathcal{A}_p$; $\mathcal{R}\gets \mathcal{R}\cup\mathcal{R}_p$; $\mathcal{K}_{p+1}\gets \mathcal{K}_p\backslash(\mathcal{A}_p\cup\mathcal{R}_p)$
		\If {$M-m+1 \leq |\mathcal{A}|$} 
		\State $E_{p+1}^m \gets \{\mathcal{A}[M-m+1]\}\times K_{p+1}$
		\Else \State $M_{p+1}\gets M-|\mathcal{A}|$ 
		\State $E_{p+1}^m \gets \{\mathcal{K}_{p+1}[m],...,\mathcal{K}_{p+1}[K_{p+1}],\mathcal{K}_{p+1}[1], ..., \mathcal{K}_{p+1}[m-1]\}$
		\EndIf
	\end{algorithmic}
\end{algorithm}

\subsection{Communication Phase}\label{subsec:comm}
After each exploration phase, all the players, including the inactive ones, share arm statistics via collisions in the communication phase. The subroutines for communication are described in Algorithms~\ref{alg:comm_leader} and \ref{alg:comm_follower} for the leader and followers, respectively. The followers encode and send their collected statistics {of active arms}, i.e., the sample mean of collected rewards $\hat{\mu}^m_p[k]$, to the leader. The leader decodes the messages and uses the information from all the players to determine the sets of accepted and rejected arms, and then sends back these two sets to the followers. Note that all of these pairwise communications utilize the implicit communication protocol described in Section \ref{sec:info}.\footnote{As mentioned in Section \ref{sec:info}, when a player is involved in neither transmission or reception, she keeps sampling her own communication arm. This ``idle'' protocol is absorbed in the \texttt{Send()} and \texttt{Receive()} functions in Algorithms \ref{alg:comm_leader} and \ref{alg:comm_follower}.}

Another important feature, besides the error-correction coding, is the careful algorithmic choice to control the communication regret at or below the order of regret caused by the exploration phases. This is achieved by transmitting the quantized sample means, denoted as $\bar{\mu}^m_p[k]$ for sample mean $\hat{\mu}^m_p[k]$, with an adaptive quantization length. This adaptive quantization length is selected such that the quantization error does not dominate the current sampling uncertainty. Specifically, by adopting $B_{T_p} = \sqrt{\frac{2\sigma^2\log(T)}{T_p}}$ as the confidence bound for the sample means at phase $p$ (which will be proved to be effective in the regret analysis), the quantization length can be chosen as $L_p = 1+Q_p$, with $1$ bit representing the integer part and $Q_p = \lceil\log_2(\frac{1}{B_{T_p}})\rceil$ bits representing the decimal part. With this quantization length, the quantization error satisfies $2^{-Q_p}\leq B_{T_p}$, which does not dominate the sampling uncertainty. In other words, with non-dominating quantization errors, sample means can be reconstructed at the leader with approximately the same accuracy.

After receiving quantized sample means from followers, arm acceptation and rejection are performed by the leader {with the aggregated sample means $\bar{\mu}_p[k]=\frac{1}{T_p}\sum_{m=1}^M\bar{\mu}^m_p[k]T^m_p$}. An arm $k$ is accepted (added to the accepted set $\mathcal{A}_p$) if it is among the top-$M_p$ active arms with high probability, which is characterized by
$$\left | \{j\in \mathcal{K}_p|\bar{\mu}_p[k]-{2}B_{T_p}\geq \bar{\mu}_p[j]+{2}B_{T_p}\} \right |\geq K_p-M_p.$$ 
Similarly, arm $k$ is rejected (added to the rejected set $\mathcal{R}_p$) if
$$|\{j\in  \mathcal{K}_p|\bar{\mu}_p[j]-{2}B_{T_p}\geq \bar{\mu}_p[k]+{2}B_{T_p}\} |\geq M_p,$$
which means that there are at least $M_p$ active arms that are better than $k$ with a high probability. Note that the overall confidence bound used in arm acceptation and rejection is $2B_{T_p}$, which consists of the sampling uncertainty and quantization error. The analysis in Section \ref{sec:theory} shows that the designed adaptive quantization length maintains a high probability of success for accepting and rejecting arms.

The set of accepted and rejected arms are then sent back to the followers, who then determine the exploration sequence for the next phase, i.e., $E_{p+1}^m$. Once there are at least $M$ accepted arms, players begin the exploitation phase and no longer communicate afterwards.

\section{Theoretical Analysis}
\label{sec:theory}
This section provides a theoretical analysis of the proposed algorithm. The overall regret for the EC3 algorithm can be decomposed as $R(T)=R^{\text{init}}+R^{\text{expl}}+R^{\text{comm}}$. The first, second and third terms respectively refer to the regret caused by the initialization, exploration and communication phases. The overall regret of EC3 in the no-sensing setting is given by Theorem \ref{thm:overall_nosens}. Each component regret is subsequently analyzed. The following proofs are presented under the no-sensing setting, which can be easily converted to the collision-sensing setting discussed in Section \ref{subsec:sensing}. {Note that in the analysis, $\sigma$ is assumed to be $1$ in order to give a better illustration of the regret; the proofs can be easily extended to any arbitrary value of $\sigma$.}
\begin{theorem}[\textbf{No-sensing regret}]
\label{thm:overall_nosens}
    The regret of the no-sensing EC3 algorithm is upper bounded as
	\begin{equation*}
	\small
	\begin{aligned}
    R_{\text{ns}}&(T)\leq 113\sum_{k>M}\frac{8\sqrt{6}\log(T)}{\mu_{(M)}-\mu_{(k)}} \\
    &+2\log_2 \left(\frac{8\sqrt{6}}{\Delta} \right)M^2K{\Delta_c}N'\left(L_H\right)\\
    &+\left(4M^2\log_2 \left(\frac{8\sqrt{6}}{\Delta} \right)+2MK+M^2K\right){\Delta_c}N'\left(\lceil\log_2(K)\rceil\right)\\
    &+6M^2K{\Delta_c}\log_2(T),
	\end{aligned}
	\end{equation*}
	where  {$\Delta=\mu_{(M)}-\mu_{(M+1)}$, $N'(L)=O(\log(T))$ is defined in Corollary \ref{col:coding_length}, {$\Delta_c = \mu_{(1)}-\nu_{(K)}$} and   $L_H=1+\left\lceil\log_2\left(\frac{8\sqrt{3}}{\Delta}\right)\right\rceil$}. Furthermore, the instance-independent regret of the no-sensing EC3 algorithm is upper bounded as
	\begin{equation*}\small
		R_{\text{ns}}(T)\leq O\left(K\sqrt{T\log(T)}\right).
	\end{equation*}
\end{theorem}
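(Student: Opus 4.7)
The plan is to use the decomposition $R_{\text{ns}}(T) = R^{\text{init}} + R^{\text{expl}} + R^{\text{comm}}$ stated above the theorem and bound each piece under a common high-probability ``clean'' event, then optimise a threshold to obtain the instance-independent bound. The clean event has two parts: (i) by Corollary~\ref{col:coding_length}, every transmitted message is decoded correctly with probability at least $1-1/T$ per message, and (ii) by a $\sigma$-subgaussian Hoeffding inequality together with a union bound over the $O(\log T)$ phases and the $K$ arms, every sample mean satisfies $|\hat{\mu}_p^m[k]-\mu[k]|\leq B_{T_p}$. Since the total number of messages is polynomial in $K$, $M$ and $\log T$, a union bound keeps the failure probability at $O(1/T)$, so the complementary event contributes at most $O(1)$ to the regret after multiplying by the per-slot worst-case loss $T\Delta_c$. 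I will condition on the clean event throughout.

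For the exploration regret I would run a successive-elimination style argument on the aggregated quantised mean $\bar{\mu}_p[k]$. The adaptive quantisation length $L_p=1+Q_p$ is chosen exactly so the quantisation error is dominated by $B_{T_p}$, hence $|\bar{\mu}_p[k]-\mu[k]|\leq 2B_{T_p}$ on the clean event. A suboptimal arm $k$ therefore cannot survive beyond the first phase $p$ at which $8B_{T_p}$ drops below $\mu_{(M)}-\mu_{(k)}$, since at that point it is separated from every top-$M$ arm by strictly more than $4B_{T_p}$ and is rejected by the leader's test. Plugging in $B_{T_p}=\sqrt{2\log(T)/T_p}$ with $T_p\sim M2^p\log T$ and summing the per-player, per-phase regret $\Delta_k\cdot 2^p\lceil\sigma^2\log T\rceil$ along the geometric schedule gives a per-arm contribution of order $\log(T)/(\mu_{(M)}-\mu_{(k)})$, which after chasing constants produces the first stated term with coefficient $113\cdot 8\sqrt{6}$.

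The communication and initialisation regrets follow from a bill-of-materials accounting. The algorithm runs at most $\lceil\log_2(8\sqrt{6}/\Delta)\rceil$ exploration phases before all top-$M$ arms are accepted; in each such phase the $M-1$ followers transmit $K_p\leq K$ quantised means of length at most $L_H$, and the leader broadcasts the accepted/rejected sets to $M-1$ followers using $O(\log K)$ bits per index plus the cardinality headers. Each bit is coded into a block of $N'(\cdot)$ pulls by Corollary~\ref{col:coding_length}, and during every such pull the worst-case per-slot loss is at most $\Delta_c$. Assembling these counts yields the three stated communication contributions: $M^2K\Delta_cN'(L_H)\log_2(8\sqrt{6}/\Delta)$ from the quantised means, $M^2K\Delta_cN'(\lceil\log_2 K\rceil)\log_2(8\sqrt{6}/\Delta)$ from the set-index broadcasts, and the $6M^2K\Delta_c\log_2(T)$ residual coming from the $\log(T)/E_r(\cdot)$ branch of $N'$ that activates when the message is too short for the capacity branch to dominate. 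The initialisation phase is handled by the same accounting applied to Algorithm~\ref{alg:init}, yielding the $2MK\Delta_cN'(\lceil\log_2 K\rceil)$ contribution.

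For the instance-independent bound I would use the classical pivot $\Delta_0 \sim \sqrt{\log(T)/T}$: for each suboptimal arm $k$, apply the instance-dependent bound of order $\log(T)/\Delta_k$ if $\mu_{(M)}-\mu_{(k)}\geq \Delta_0$ and the trivial bound $T\Delta_0$ otherwise, giving $O(\sqrt{T\log T})$ per arm and $O(K\sqrt{T\log T})$ in total; the communication terms scale only as $\mathrm{polylog}(T)$ once $\Delta$ is plugged in and are therefore dominated. The main obstacle I anticipate is the analysis in the second paragraph: the deviation $|\bar{\mu}_p[k]-\mu[k]|$ must be split cleanly into sampling noise, quantisation error, and error propagated from the decoding of another player's message, with each source kept in the $O(1/T)$ failure regime. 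Verifying that $L_p$ is simultaneously tight enough not to inflate the communication cost and loose enough not to corrupt the acceptance test, while tracking the exact constants $113$ and $8\sqrt{6}$, is the delicate step that pins down the final form of the bound.
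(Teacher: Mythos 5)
Your overall route is the same as the paper's: decompose $R_{\text{ns}}(T)=R^{\text{init}}+R^{\text{expl}}+R^{\text{comm}}$, condition on a typical event combining correct decoding and concentration of the (quantised, aggregated) means within $2B_{T_p}$, run a successive-elimination argument for the exploration term, do a bill-of-materials count for the communication and initialisation terms, and use the standard $\Delta_0\sim\sqrt{\log(T)/T}$ pivot for the instance-independent bound. All of that matches the paper's Lemmas on initialization, exploration, and communication regret.

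There is, however, a concrete accounting error in how you handle the atypical event and in where the fourth term $6M^2K\Delta_c\log_2(T)$ comes from. You assert that a union bound over all messages ``keeps the failure probability at $O(1/T)$,'' so the complement contributes $O(1)$ after multiplying by $MT\Delta_c$. But there are $\Theta(MK\log T)$ messages and $\Theta(K\log T)$ concentration events, each controlled only at level $1/T$, so the union bound gives a failure probability of order $MK\log_2(T)/T$ (the paper gets $P_s\geq 1-6MK\log_2(T)/T$). Multiplying by the linear worst-case regret $MT\Delta_c$ yields exactly the $6M^2K\Delta_c\log_2(T)$ term --- that term \emph{is} the atypical-event contribution, not a residual of the $\log(T)/E_r(\cdot)$ branch of $N'(\cdot)$ as you claim. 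The two branches of $N'(L)\leq\max\{L/(C-\varphi),\log(T)/E_r(C-\varphi)\}$ are already fully absorbed into the second and third terms of the bound, which carry $N'(L_H)$ and $N'(\lceil\log_2 K\rceil)$ explicitly; no separate term is generated by the short-message branch. As written, your derivation would drop the genuine $\Theta(M^2K\Delta_c\log T)$ atypical-event cost while inventing a term that is already counted elsewhere. Fixing this requires only redoing the union bound honestly and writing $R(T)\leq P_sR_s+(1-P_s)MT\Delta_c$; the rest of your argument then assembles into the stated bound.
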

{We note that the parameter $\Delta_c\leq 1$ in Theorem~\ref{thm:overall_nosens} represents a general upper bound for collision loss per step, e.g., for communications, and discussions on a more precise arm-specific characterization can be found in Section \ref{subsec:ext_aspt1}.}
\begin{corollary}\label{col:overall_nosens}
	The asymptotic upper bound of the no-sensing EC3 algorithm can be derived as
	\begin{equation}\label{eqn:asym2}
		\small
		R_{\text{ns}}(T){\leq}O\left(\sum_{k> M}\frac{\log(T)}{\mu_{(M)}-\mu_{(k)}}+\frac{M^2K{\Delta_c}\log(\frac{1}{\Delta})\log(T)}{{E_r(C-\varphi)}}\right),%
	\end{equation}
	{where $\varphi$ is an arbitrary constant in $(0,C)$.}
\end{corollary}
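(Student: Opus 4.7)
The plan is to start from the finite-time regret bound in Theorem~\ref{thm:overall_nosens} and asymptotically simplify each of its four summands by invoking Corollary~\ref{col:coding_length}, which supplies the optimal coding block length $N'(L)\leq \max\{L/(C-\varphi),\log(T)/E_r(C-\varphi)\}$. The first summand $113\sum_{k>M}\frac{8\sqrt{6}\log(T)}{\mu_{(M)}-\mu_{(k)}}$ is already in the form claimed for the first term of \eqref{eqn:asym2}, so no work is needed there beyond absorbing the numerical constant into the big-$O$.

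The second summand is the one that dictates the second term of \eqref{eqn:asym2}. Here I would substitute $N'(L_H)$ with its bound from Corollary~\ref{col:coding_length}. Since $L_H=1+\lceil\log_2(8\sqrt{3}/\Delta)\rceil=O(\log(1/\Delta))$ grows only logarithmically in $1/\Delta$ and is independent of $T$, while the second branch of the max grows like $\log(T)$, the second branch dominates as $T\to\infty$, giving $N'(L_H)=O(\log(T)/E_r(C-\varphi))$. Multiplying by the prefactor $2\log_2(8\sqrt{6}/\Delta)M^2K\Delta_c=O(M^2K\Delta_c\log(1/\Delta))$ reproduces exactly $O\bigl(M^2K\Delta_c\log(1/\Delta)\log(T)/E_r(C-\varphi)\bigr)$.

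The third summand is handled analogously: $\lceil\log_2(K)\rceil$ is likewise $O(1)$ in $T$, so $N'(\lceil\log_2(K)\rceil)=O(\log(T)/E_r(C-\varphi))$. Its prefactor $(4M^2\log_2(8\sqrt{6}/\Delta)+2MK+M^2K)\Delta_c=O((M^2\log(1/\Delta)+MK)\Delta_c)$ is of the same or lower order than $M^2K\Delta_c\log(1/\Delta)$, so this entire summand is absorbed into the second term of \eqref{eqn:asym2}. Finally, the fourth summand $6M^2K\Delta_c\log_2(T)$ is $O(M^2K\Delta_c\log(T))$; since $E_r(C-\varphi)$ is a positive constant depending only on the (instance-dependent) channel parameters $\mu_{\min}$ and $\nu_{\max}$, dividing by $E_r(C-\varphi)$ only changes the implicit constant, so this summand is likewise subsumed.

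The only subtle point — hardly an obstacle — is to keep track of which quantities are treated as $T$-independent constants when forming the asymptotic bound: $K$, $M$, $\Delta$, $\mu_{\min}$, $\nu_{\max}$, and hence $C$ and $E_r(C-\varphi)$, are all instance parameters held fixed while $T\to\infty$, which justifies both using the $\log(T)/E_r(C-\varphi)$ branch of Corollary~\ref{col:coding_length} and absorbing lower-order prefactors. After these substitutions and bookkeeping, the four summands collapse into the two-term bound \eqref{eqn:asym2}, completing the proof.
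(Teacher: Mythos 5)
Your proposal is correct and follows essentially the same route the paper intends: the corollary is obtained directly from Theorem~\ref{thm:overall_nosens} by substituting the bound $N'(L)\leq \max\{L/(C-\varphi),\log(T)/E_r(C-\varphi)\}$ from Corollary~\ref{col:coding_length}, observing that $L_H$ and $\lceil\log_2(K)\rceil$ are $T$-independent so the $\log(T)/E_r(C-\varphi)$ branch dominates, and absorbing the remaining summands as lower-order in the fixed instance parameters. Your bookkeeping of which quantities are held constant as $T\to\infty$ matches the paper's treatment, so nothing further is needed.
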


To understand the upper bounds in Theorem \ref{thm:overall_nosens} and Corollary \ref{col:overall_nosens}, it is instrumental to compare the regret with the lower bound.
As proven in \cite{anantharam1987asymptotically}, the performance of any consistent algorithm $\phi$ for the corresponding \textit{centralized} MP-MAB model satisfies: 
\begin{equation}\label{eqn:lower_bound}
	\liminf_{T\to\infty}\frac{R^{\phi}(T)}{\log(T)}\geq \sum_{k> M}\frac{\mu_{(M)}-\mu_{(k)}}{\text{kl}(\mu_{(k)},\mu_{(M)})},
\end{equation}
where $\text{kl}(\mu_{(i)},\mu_{(j)})$ denotes the KL-divergence between the two corresponding reward distributions. Eqn.~\eqref{eqn:lower_bound} naturally serves as a {lower bound} for the decentralized MP-MAB problems with collision-dependent rewards.  We note that the exploration loss is of the same order\footnote{In the case that reward distributions $\{A_k\}$ are Bernoulli, it can be observed that the two terms are of the same order by implementing the inequality $\text{kl}(\mu_i,\mu_j)\geq 2(\mu_i-\mu_j)^2$.} as Eqn. \eqref{eqn:lower_bound} and only an additional regret term in Eqn.~\eqref{eqn:asym2} is introduced, referring to the initialization and communication regret caused by the nature of decentralization and no-sensing. Hence, we conclude that EC3 algorithm for the no-sensing setting is order-optimal.

We also note that compared with collision-eliminated rewards \cite{boursier2018sic,Shi2020aistats}, collision-dependent rewards have the following effects on the regret bound in Theorem~\ref{thm:overall_nosens}. First, as stated in Section~\ref{subsec:rel_comm}, the coding length $N'(\cdot)$ is characterized by the worst-case $\mu_{\min}$ and $\nu_{\max}$, which is unique in the collision-dependent reward setting. Also, the parameter $\Delta_c$ indicates that although collisions happen during communications, some reduced rewards can still be collected (instead of zero reward in the collision-eliminated reward setting).

The remainder of this section is dedicated to the proof of Theorem~\ref{thm:overall_nosens}. To accomplish this, we first define the \textit{typical event} as the success of initialization, communication and exploration throughout the entire horizon $T$. More specifically, we define three events: 
\begin{align*}
	&A_1=\{\text{all players estimate $M$ correctly}\};\\ 
	&A_2=\{\text{all communication messages are decoded correctly}\};\\
	&A_3=\{|\bar{\mu}_p[k]-\mu[k]|\leq {2}B_{T_p} \text{ holds}, \forall k\in\mathcal{K}_p, \forall p \}.
\end{align*}
We use $P_s$ to denote the probability that the typical event, which is $A_1 \cap A_2 \cap A_3$, happens. The regret caused by the \textit{atypical event} can be simply upper bounded by a linear regret $MT\Delta_c$. Then the result of Theorem \ref{thm:overall_nosens} can be proved by controlling $P_s$ to balance both events. We give the detailed proof for the no-sensing regret below. In addition, the proofs of all lemmas, except Lemma~\ref{lem:expl_regret}, can be found in Appendix~\ref{app:lem:init_regret} to \ref{app:lem:comm_regret}.

\subsection{Initialization Regret}

The initialization phase can be viewed as $2K$ communications; hence the regret can be bounded as follows. 
\begin{lemma}
\label{lem:init_regret}
	Denote $P_i = \mathbb{P}\{A_1\}$. Then $P_i\geq 1-\frac{M+K}{T}$, and the regret of initialization phase is bounded as:
	$$
	R^{\text{init}}\leq  2MKN'({\lceil\log_2(K)\rceil}){\Delta_c}.
	$$
\end{lemma}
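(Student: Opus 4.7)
The plan is to handle the probability bound and the regret bound in two separate steps, both exploiting the decomposition of the initialization phase into a bounded number of coded communications.

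First I would bound $P_i$ by a union bound over all individual coded transmissions. The initialization phase consists of (i) the leader running \texttt{Receive} once per arm $k=2,\ldots,K$ to detect whether a player is fixated there (so at most $K-1$ decodings), and (ii) the leader broadcasting the estimated $M$ to each of the $M-1$ followers (so at most $M-1$ decodings). Every one of these transmissions uses a codeword of length $N'(\lceil\log_2(K)\rceil)$ chosen, per Corollary~\ref{col:coding_length}, so that the per-message decoding error probability is at most $1/T$. A union bound over the at most $(K-1)+(M-1)\le K+M$ transmissions gives $\mathbb{P}\{A_1\}\ge 1-(K+M)/T$, matching the claim. The only subtlety here is to make sure the payload length $\lceil\log_2(K)\rceil$ really does suffice to encode an integer $M\in\{1,\ldots,K\}$, which it does since $M\le K$, so the same $N'(\lceil\log_2(K)\rceil)$ bound applies uniformly to all messages in this phase.

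Next I would bound $R^{\text{init}}$ by a worst-case per-slot regret argument. In each time slot of the initialization phase, at most $M$ players are active, and each of them can incur an instantaneous loss of at most $\Delta_c=\mu_{(1)}-\nu_{(K)}$ relative to the oracle allocation (this is the largest possible gap between the top-$M$ no-collision mean and any realized collision mean). The total number of time slots used by initialization is at most $2K\cdot N'(\lceil\log_2(K)\rceil)$, corresponding to $K-1$ receive slots plus $M-1\le K-1$ send slots, each of length $N'(\lceil\log_2(K)\rceil)$. Multiplying the number of slots by the per-slot loss $M\Delta_c$ gives exactly $R^{\text{init}}\le 2MK\,N'(\lceil\log_2(K)\rceil)\,\Delta_c$.

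The main obstacle, if any, is more bookkeeping than conceptual: I need to make sure the crude $2K$ slot count truly upper-bounds both the receive sweep and the send sweep simultaneously (using $M\le K$), and that the per-slot loss $M\Delta_c$ is a valid uniform bound even for followers who may be involved in ongoing pulls of their communication arms during the leader's receive/send operations. Because Assumption~\ref{aspt:bounds} guarantees $\nu_{\max}<\mu_{\min}$ and all rewards lie in $[0,1]$, the gap between the centralized oracle reward and any realized reward (collision or not) is bounded by $\Delta_c$, so the per-slot bound holds regardless of which arm a player happens to pull. Given these observations the result follows directly, with no further machinery needed beyond Corollary~\ref{col:coding_length}.
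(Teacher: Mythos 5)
Your proposal is correct and follows essentially the same route as the paper: a union bound over the at most $(K-1)+(M-1)$ coded transmissions (each failing with probability at most $1/T$ by Corollary~\ref{col:coding_length}) for the probability claim, and a worst-case per-slot loss of $M\Delta_c$ multiplied by the at most $2K\,N'(\lceil\log_2(K)\rceil)$ initialization slots for the regret claim. The only cosmetic difference is that the paper first counts the two parts with their actual codeword lengths $N'(1)$ and $N'(\lceil\log_2(M)\rceil)$ and then unifies to $N'(\lceil\log_2(K)\rceil)$ using monotonicity of $N'$ and $M\le K$, whereas you apply the uniform bound from the start; both yield the identical final inequality.
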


\subsection{Exploration Regret}
The exploration regret is characterized as follows.
\begin{lemma}\label{lem:expl_regret}
\begin{enumerate}[leftmargin=12pt,topsep=0pt, itemsep=0pt,parsep=0pt]
\item[1)] 
Denote $P_s = \mathbb{P}\{A_1 \cap A_2 \cap A_3\}$ as the probability of the typical event.  Then {it holds that} $$P_s \geq 1-\frac{6MK\log_2(T)}{T}.$$
\item[2)] 
Conditioned on the typical event, the regret of exploration phase is bounded as:
	\begin{equation*}\small
		R^{\text{expl}} \leq 113\sum_{k>M}\min \left\{\frac{8\sqrt{6}\log(T)}{\mu_{(M)}-\mu_{(k)}},\sqrt{T\log(T)}\right\}.
	\end{equation*}
\end{enumerate}
\end{lemma}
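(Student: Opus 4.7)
The plan is to prove the two parts separately, with part (1) being a union bound over the three sub-events and part (2) being a standard successive-elimination argument on top of the concentration guarantees from part (1).

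For part (1), I would write $1 - P_s \leq \Pr(A_1^c) + \Pr(A_2^c) + \Pr(A_3^c)$. The first term is at most $(M+K)/T$ directly from Lemma \ref{lem:init_regret}. For $\Pr(A_2^c)$, I would first argue that the number of exploration/communication phases is $O(\log_2 T)$, since $T_p = \sum_{q=1}^p M_q 2^q\lceil \sigma^2\log T\rceil$ grows at least geometrically and cannot exceed $T$. In each such phase the implicit communication protocol transmits at most $O(MK)$ messages (followers transmit quantized statistics for up to $K$ active arms, the leader transmits the cardinalities and then the contents of $\mathcal{A}_p,\mathcal{R}_p$). Corollary \ref{col:coding_length} guarantees each message is decoded incorrectly with probability at most $1/T$, so a union bound yields $\Pr(A_2^c) = O(MK\log T /T)$. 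For $\Pr(A_3^c)$, the aggregated pre-quantization mean $\hat{\mu}_p[k]$ is an average of $T_p$ independent $\sigma$-subgaussian samples, so by Hoeffding $\Pr(|\hat{\mu}_p[k]-\mu[k]|>B_{T_p}) \leq 2\exp(-T_p B_{T_p}^2/(2\sigma^2)) = 2/T$ by the choice of $B_{T_p}$. The deterministic quantization error satisfies $|\bar{\mu}_p[k]-\hat{\mu}_p[k]| \leq 2^{-Q_p} \leq B_{T_p}$ by construction of $Q_p$, so the triangle inequality gives $|\bar{\mu}_p[k]-\mu[k]| \leq 2B_{T_p}$ with probability $\geq 1-2/T$. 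Unioning over $K$ arms and $O(\log T)$ phases absorbs into the dominant $O(MK\log T/T)$ term, giving the claimed $P_s \geq 1 - 6MK\log_2(T)/T$ after constant tracking.

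For part (2), conditioning on the typical event, the rejection/acceptation rules in Algorithm \ref{alg:comm_leader} behave correctly: since each $\bar{\mu}_p[\cdot]$ is within $2B_{T_p}$ of the true mean, if two arms $k,j$ satisfy $\mu[k]-\mu[j] \geq 8B_{T_p}$ then the empirical test $\bar{\mu}_p[k]-2B_{T_p} \geq \bar{\mu}_p[j]+2B_{T_p}$ must trigger. For each sub-optimal arm $k > M$ with gap $\Delta_k = \mu_{(M)}-\mu_{(k)}$, I would define $p^*_k$ as the smallest phase index with $8B_{T_{p^*_k}} \leq \Delta_k$, i.e., $T_{p^*_k} \geq 128\sigma^2\log(T)/\Delta_k^2$; by the end of this phase arm $k$ is rejected, so the total number of pulls on $k$ is at most $T_{p^*_k}$. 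The doubling property $T_{p^*_k} \leq 2T_{p^*_k-1} + M_{p^*_k}2^{p^*_k}\lceil\sigma^2\log T\rceil$ plus the fact that $T_{p^*_k-1}$ barely fails the threshold yields $T_{p^*_k} = O(\log(T)/\Delta_k^2)$. The opportunity-cost regret contribution of arm $k$ is at most $\Delta_k \cdot T_{p^*_k} = O(\log(T)/\Delta_k)$, which is the first term inside the $\min$. The $\sqrt{T\log T}$ alternative covers the regime where $\Delta_k$ is so small that $\log(T)/\Delta_k \geq \sqrt{T\log T}$; there I would bound the pulls on $k$ trivially by $T$ and note that in this regime $\Delta_k T \leq \sqrt{T\log T}$, so the worst case per arm is still $\min\{\log(T)/\Delta_k,\sqrt{T\log T}\}$ up to constants. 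Summing over $k>M$ gives the result.

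The main obstacle is the careful constant tracking: the interplay between the $\sqrt{2}$ inside $B_{T_p}$, the factor-of-two inflation due to quantization, the $4B_{T_p}$ gap in the rejection test, and the doubling overhead combine to produce the specific $8\sqrt{6}$ and $113$ constants. A secondary subtlety is accounting for the pulls on sub-optimal arms correctly as an ``opportunity cost'' per pull — one must argue that every extra pull on active sub-optimal arm $k$ displaces a pull on some top-$M$ arm whose mean exceeds $\mu[k]$ by at least $\Delta_k$, which requires a brief examination of the exploration sequences and the fact that active players always pull distinct active arms, keeping the phases collision-free.
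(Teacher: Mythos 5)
Your part (1) follows the paper's route exactly: a union bound over $A_1^c$, $A_2^c$, $A_3^c$, with the communication term controlled by Corollary~\ref{col:coding_length} over $O(MK\log_2 T)$ messages and the estimation term by subgaussian concentration plus the deterministic quantization error $2^{-Q_p}\le B_{T_p}$. The only (minor) omission there is the quantization-overflow event: the integer part of $\hat\mu_p^m[k]$ is encoded with a single bit, so the paper must separately bound $\mathbb{P}(\hat\mu_p^m[k]>2)$ before the triangle-inequality step applies (this is the event $\chi_p^m[k]=0$ in the proof of Lemma~\ref{lem:success_est}); your sketch silently assumes quantization always succeeds.

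Part (2) has a genuine gap. The exploration regret decomposes into two pieces (Eqn.~\eqref{eqn:expl_decom}): the cost $\sum_{k>M}(\mu_{(M)}-\mu_{(k)})T^{\text{expl}}_{(k)}$ of pulling sub-optimal arms, which you handle correctly via the rejection time $T_{p^*_k}=O(\log T/\Delta_k^2)$, and the cost $\sum_{j\le M}(\mu_{(j)}-\mu_{(M)})(T^{\text{expl}}-T^{\text{expl}}_{(j)})$ of \emph{not} pulling the strictly-better top arms while exploration is ongoing. For the second piece your displacement argument asserts that the displaced top-$M$ arm's mean exceeds $\mu_{(k)}$ ``by at least $\Delta_k$'' --- that is a lower bound on the per-step loss, and what an upper bound on regret needs is control of $\mu_{(j)}-\mu_{(M)}$ from above, which can be as large as $\mu_{(1)}-\mu_{(M)}\gg\Delta_k$. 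The missing idea is that a top arm $j$ still active at phase $p$ must satisfy $\mu_{(j)}-\mu_{(M+1)}<\Delta(p)=\sqrt{384\log(T)/T_{p-1}}$ (otherwise it would already have been accepted by Lemma~\ref{lem:expl_pull}), so its excess over $\mu_{(M)}$ shrinks geometrically with the phase index; combining this with the identity $K_p-M_p=\sum_{j>M}\mathds{1}\{\hat t_{(j)}>T_{p-1}\}$ and a telescoping sum in $1/\Delta(p)$ charges the whole second piece back to the sub-optimal arms, yielding the $93$ out of the constant $113$ (Lemmas~\ref{appendix:lem_part1} and~\ref{appendix:lem_part2}). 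Without this step the proof does not close, since the second piece is not dominated by the first by any elementary comparison.
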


\begin{proof}
Based on the optimal error-correction code (in the sense of error exponent) in {Corollary~\ref{col:coding_length}}, we first show that event $A_2$ happens with a high probability. 

\begin{lemma}\label{lem:success_com}
	Denoting $P_r=\mathbb{P}\{A_2\}$, we have $$P_r \geq 1-\frac{3MK\log_2(T)}{T}.$$
\end{lemma}

Next we analyze event $A_3$, which indicates the estimations of all arms reconstructed by the leader are within the confidence interval of $2B_{T_p}$ at all phases, in the following lemma.

\begin{lemma}\label{lem:success_est}
	Denoting $P_c=\mathbb{P}\{A_3\}$, it can be bounded as 
	$$P_c\geq 1-\frac{2MK\log_2(T)}{T}.$$
\end{lemma}
 
 Combining $P_i$, $P_r$ and $P_c$, the probability $P_s$ as defined in Theorem \ref{lem:expl_regret} can be obtained and the number that an arm is pulled before being accepted or rejected is controlled as:
 \begin{lemma}
 	\label{lem:expl_pull}
 	Conditioned on the typical event, every optimal arm $k$ (i.e., the top-$M$ arms) is accepted after at most $\frac{384\log(T)}{(\mu[k]-\mu_{(M+1)})^2}$ pulls, and every sub-optimal arm $k$ is rejected after at most $\frac{384\log(T)}{(\mu_{(M)}-\mu[k])^2}$ arm pulls. 
 \end{lemma}

Since the exploration phases are collision-free in the typical event, the exploration regret can be decomposed as \cite{anantharam1987asymptotically}:
	\begin{equation}
	\label{eqn:expl_decom}
	\begin{aligned}
	R^{\text{expl}}&=\sum_{k>M}\left(\mu_{(M)}-\mu_{(k)}\right)T^{\text{expl}}_{(k)}(T)
	\\&+\sum_{k\leq M}\left(\mu_{(k)}-\mu_{(M)}\right)\left(T^{\text{expl}}-T^{\text{expl}}_{(k)}(T)\right),
	\end{aligned}
	\end{equation}
	where $T^{\text{expl}}$ is the overall time of exploration and exploitation phase and $T^{\text{expl}}_{(k)}(T)$ is the number of time steps where the $k$-th best arm is pulled during these two phases. Based on Lemma \ref{lem:expl_pull}, these two terms can be bounded as:
	\begin{lemma}
		\label{lem:expl_decom}
		Conditioned on the typical event, the following two results hold simultaneously:
		\begin{equation}
		\small
		\notag
		\begin{aligned}
		\text{1) }&\text{For a sub-optimal arm $k$, }(\mu_{(M)}-\mu[k])T^{\text{expl}}_{k}(T) \\
		&\leq 20\min\left\{\frac{8\sqrt{6}\log(T)}{\mu_{(M)}-\mu[k]},\sqrt{T\log(T)}\right\};
		\\
		\text{2) }&\sum_{k\leq M}(\mu_{(k)}-\mu_{(M)})(T^{\text{expl}}-T^{\text{expl}}_{(k)}) \\
		&\leq 93\left(\sum_{k>M}\min\left\{\frac{8\sqrt{6}\log(T)}{\mu_{(M)}-\mu_{(k)}},\sqrt{T\log(T)}\right\}\right).
		\end{aligned}
		\end{equation}
	\end{lemma}
	
Finally, Lemma \ref{lem:expl_regret} can be proved by combining Eqn. \eqref{eqn:expl_decom} and Lemma \ref{lem:expl_decom}. 
\end{proof}

\subsection{Communication Regret}
The key result for the communication phases is that their regret does not dominate the overall regret, thanks to the expanded length of each exploration phase and the {adaptive quantization length} of arm statistics, as stated in Lemma \ref{lem:comm_regret}.

\begin{lemma}\label{lem:comm_regret}
	Conditioned on the typical event, the regret of the communication phase is bounded by:
	\begin{equation*}
 	\small
	\begin{aligned}
	&R^{\text{comm}}\leq 2\log_2 \left(\min\left\{\frac{8\sqrt{6}}{\Delta},\sqrt{T}\right\} \right)M^2KN'\left(L_H\right){\Delta_c}\\
	&+M^2\left(4\log_2 \left(\min\left\{\frac{8\sqrt{6}}{\Delta},\sqrt{T}\right\} \right)+K\right)N'(\lceil\log_2(K)\rceil){\Delta_c},
	\end{aligned}
\end{equation*}
	where $L_H=  1+\left\lceil\log_2\left(\min\left\{\frac{8\sqrt{3}}{\Delta},\sqrt{T}\right\}\right)\right\rceil$.
\end{lemma}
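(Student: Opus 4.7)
The plan is to decompose the total communication regret phase-by-phase and then message-by-message. Within a single communication phase $p$, three classes of messages are exchanged via the implicit protocol: (i) each of the $M-1$ followers transmits its $K_p$ quantized sample means $\bar{\mu}_p^m[k]$, each of length $L_p=1+Q_p$ bits, to the leader; (ii) the leader sends the pair $(|\mathcal{A}_p|,|\mathcal{R}_p|)$ using $2\lceil\log_2(K)\rceil$ bits to each of the $M-1$ followers; and (iii) the leader sends the index list of $\mathcal{A}_p\cup\mathcal{R}_p$, at $\lceil\log_2(K)\rceil$ bits per index, to each follower. By Corollary \ref{col:coding_length}, a message of $L$ bits costs at most $N'(L)$ channel uses, and during any such slot all $M$ players are away from their allocated optimal arms while the pair engaged in the current \texttt{Send}/\texttt{Receive} may collide, so the per-slot regret is at most $M\Delta_c$.

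Next I would bound the number of communication phases $P^*$ and the worst-case quantization length $L_H$. Conditioned on the typical event, Lemma \ref{lem:expl_pull} ensures every arm is accepted or rejected after at most $384\log(T)/\Delta^2$ pulls. Since the cumulative per-arm pulls after phase $p$ satisfy $T_p\geq 2^p\lceil\sigma^2\log(T)\rceil$, this forces $P^*\leq 2\log_2(\min\{8\sqrt{6}/\Delta,\sqrt{T}\})$, where the $\sqrt{T}$ branch arises from the trivial cap $T_{P^*}\leq T$ when $\Delta$ is too small to be informative. Substituting $T_{P^*}$ into $B_{T_p}=\sqrt{2\log(T)/T_p}$ and then into $Q_{P^*}=\lceil\log_2(1/B_{T_{P^*}})\rceil$ yields the uniform upper bound $L_p\leq L_H=1+\lceil\log_2(\min\{8\sqrt{3}/\Delta,\sqrt{T}\})\rceil$.

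Summing the per-phase channel uses, weighted by $M\Delta_c$ per slot across $p=1,\ldots,P^*$, produces the stated bound. Type (i) contributes at most $P^*(M-1)K\,N'(L_H)$ channel uses, yielding the first term $2\log_2(\min\{8\sqrt{6}/\Delta,\sqrt{T}\})M^2KN'(L_H)\Delta_c$. Type (ii) contributes $P^*\cdot 2(M-1)\,N'(\lceil\log_2(K)\rceil)$ slots, giving $4\log_2(\min\{8\sqrt{6}/\Delta,\sqrt{T}\})M^2N'(\lceil\log_2(K)\rceil)\Delta_c$. For type (iii) the key observation is that across the entire run each arm is added to $\mathcal{A}_p\cup\mathcal{R}_p$ at most once, so the cumulative index traffic satisfies $\sum_p|\mathcal{A}_p\cup\mathcal{R}_p|\leq K$ rather than the loose $KP^*$; this yields only $(M-1)KM\,N'(\lceil\log_2(K)\rceil)\Delta_c$, avoiding an extra $\log(1/\Delta)$ factor. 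Combining (ii) and (iii) yields the second stated term.

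The main obstacle is the bookkeeping for type (iii): one has to exploit the fact that accepted or rejected arms permanently leave $\mathcal{K}_p$, so the cumulative index traffic is $O(K)$ rather than $O(KP^*)$; otherwise the bound would pick up a spurious $\log(1/\Delta)$ factor and no longer match the stated form. A secondary point requiring care is the per-slot bound $M\Delta_c$, which depends on noting that even the inactive players must migrate to their own communication arm during each communication slot and can therefore each incur up to $\Delta_c$ of regret.
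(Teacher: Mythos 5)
Your proposal is correct and follows essentially the same route as the paper's proof: the same three-way decomposition of the communication traffic (statistics, set cardinalities, set contents), the same bound $H\leq 2\log_2(\min\{8\sqrt{6}/\Delta,\sqrt{T}\})$ on the number of phases via Lemma~\ref{lem:expl_pull}, the same monotone bound $N'(L_p)\leq N'(L_H)$, and the same key observation that each arm enters $\mathcal{A}_p\cup\mathcal{R}_p$ at most once so the index traffic totals $O(K)$ rather than $O(KH)$. The per-slot accounting ($M$ players each losing at most $\Delta_c$ per communication slot) also matches the paper's $M^2K N'(\cdot)\Delta_c$ bookkeeping.
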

Note that $\log_2 \left(\min\left\{\frac{8\sqrt{6}}{\Delta},\sqrt{T}\right\} \right)$ becomes constant when $T$ is sufficient large. With {$N'(L)=O(\log(T))$}, asymptotically, the communication regret has an order of $O(\log(T))$, which is the same as the regret of other phases.

\subsection{Putting Everything Together}
We are now in the position to analyze the overall regret of EC3. With probability $P_s$, the typical event happens and the regret $R_s$ is the sum of the regret of $R^{\text{init}}, R^{\text{comm}}$ and $R^{\text{expl}}$ as analyzed in Lemma \ref{lem:init_regret}, \ref{lem:expl_regret} and \ref{lem:comm_regret}, respectively. With probability $1-P_s$, the atypical event happens and the regret can be linearly upper bounded as $R_f=MT{\Delta_c}$. Finally, the overall regret is the average of these two terms as
\begin{equation*}
\begin{aligned}
R(T)&=P_s R_s+(1-P_s)R_f\\
&\leq R^{\text{init}}+R^{\text{expl}}+R^{\text{comm}}+6M^2K{\Delta_c}\log_2(T).
\end{aligned}
\end{equation*}
Plugging in Lemmas \ref{lem:init_regret}, \ref{lem:expl_regret} and \ref{lem:comm_regret}, Theorem \ref{thm:overall_nosens} is proven.

\section{Extensions and Enhancements}
\label{sec:modelext}

\subsection{Collision-sensing model}
\label{subsec:sensing}

Although the proposed EC3 algorithm is primarily for no-sensing MP-MAB with collision-dependent rewards, it can be trivially extended to handle the collision-sensing counterpart. For the sake of completeness, we hereby highlight the necessary changes to EC3 for it to work in the collision-sensing setting, and present the corresponding regret analysis.
\subsubsection{Algorithm design}
In the collision-sensing model, the same EC3 design can be applied with the exception that error-correction coding is no longer necessary. {Adaptively} quantized sample means can be directly transmitted via forced collisions. Algorithms  \ref{alg:overall}, \ref{alg:init}, \ref{alg:comm_leader}, and \ref{alg:comm_follower} can be applied to the collision-sensing setting by removing the functions \texttt{Encoder()} and \texttt{Decoder()}. 
\subsubsection{Regret analysis}
The following theorem for the regret of EC3 in the collision-sensing setting can be similarly derived by following the steps for Theorem \ref{thm:overall_nosens} but omitting the loss caused by coding in the initialization and communication phases.
\begin{theorem}[\textbf{Collision-sensing regret}]
	\label{thm:overall_sens}
	The regret of the collision-sensing EC3 algorithm is upper bounded as
		\begin{equation*}
			\small
			\begin{aligned}
				R_{\text{cs}}&(T)\leq 113\sum_{k>M}\frac{8\sqrt{6}\log(T)}{\mu_{(M)}-\mu_{(k)}} \\
				&+2M^2K\log_2 \left(\frac{8\sqrt{6}}{\Delta} \right){\Delta_c}+4M^2\log_2 \left(\frac{8\sqrt{6}}{\Delta} \right){\Delta_c}\\
				&+M^2K{\Delta_c}+2MK{\Delta_c}+2M^2K{\Delta_c}\log_2(T).
			\end{aligned}
		\end{equation*}
		Furthermore, the instance-independent regret of the collision-sensing EC3 algorithm is upper bounded as
		\begin{equation*}\small
			R_{\text{cs}}(T)\leq O\left(K\sqrt{T\log(T)}\right).
		\end{equation*}
\end{theorem}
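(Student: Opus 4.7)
The plan is to replay the proof of Theorem~\ref{thm:overall_nosens} with two structural simplifications that are specific to the collision-sensing setting. First, since the collision indicator is observed noiselessly, the implicit-communication protocols \texttt{Send} and \texttt{Receive} (Algorithms~\ref{alg:send} and~\ref{alg:receive}) are error-free, so the typical event degenerates to just $A_3$: events $A_1$ and $A_2$ hold with probability one, and Lemmas~\ref{lem:init_regret} and~\ref{lem:success_com} become vacuous. Second, each bit is transmitted reliably in a single time slot, so the error-correction block length $N'(L)$ from Corollary~\ref{col:coding_length} collapses to the raw message length $L$ in the counting. This removes every $N'(\cdot)$ factor that appeared in the proof of Theorem~\ref{thm:overall_nosens} from the $R^{\text{init}}$ and $R^{\text{comm}}$ bounds.

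Concretely, I would first use Lemma~\ref{lem:success_est} directly (its proof depends solely on $\sigma$-subgaussian concentration of the $\bar{\mu}_p[k]$ and is independent of the sensing model) to obtain $P_s \geq 1 - 2MK\log_2(T)/T$. Bounding the atypical-event regret by $R_f \leq MT\Delta_c$ then yields the final term $2M^2K\Delta_c\log_2(T)$. Conditioned on the typical event, the exploration regret of Lemma~\ref{lem:expl_regret} is unchanged, because its proof invokes only Lemma~\ref{lem:expl_pull}, which in turn depends only on event $A_3$; so the leading $113\sum_{k>M}\frac{8\sqrt{6}\log(T)}{\mu_{(M)}-\mu_{(k)}}$ term carries over verbatim. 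For the initialization phase, the slot counting of Lemma~\ref{lem:init_regret} reduces (with $N'(\lceil\log_2(K)\rceil)$ removed) to the $2MK\Delta_c$ term. For the communication phase, I would revisit Lemma~\ref{lem:comm_regret}: at most $\log_2(8\sqrt{6}/\Delta)$ phases, each comprising $O(M^2K)$ quantized-statistic exchanges and $O(M^2 + MK)$ accepted/rejected-arm-index exchanges, which without the coding multiplier produces the three middle terms $2M^2K\log_2(8\sqrt{6}/\Delta)\Delta_c$, $4M^2\log_2(8\sqrt{6}/\Delta)\Delta_c$, and $M^2K\Delta_c$ of the claimed bound.

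The instance-independent bound $O(K\sqrt{T\log T})$ then follows by exactly the same $\min\{\cdot,\sqrt{T\log T}\}$ envelope argument used in Lemma~\ref{lem:expl_decom} and Corollary~\ref{col:overall_nosens}: the exploration contribution is dominated by $O(K\sqrt{T\log T})$, while the initialization, communication, and atypical-event contributions are all polylogarithmic in $T$. The only real obstacle I anticipate is bookkeeping, i.e., matching the integer constants in the theorem statement with those that naturally appear when the $N'(\cdot)$ factors are stripped out of the no-sensing computation; there is no new probabilistic or combinatorial ingredient beyond Lemmas~\ref{lem:init_regret}--\ref{lem:comm_regret}, which makes the argument essentially a corollary of the no-sensing analysis.
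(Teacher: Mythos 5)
Your proposal is correct and follows essentially the same route as the paper, which itself only sketches this result as ``follow the proof of Theorem~\ref{thm:overall_nosens} but omit the coding loss in the initialization and communication phases'': the typical event reduces to $A_3$, the exploration bound carries over unchanged, the $N'(\cdot)$ factors are stripped from $R^{\text{init}}$ and $R^{\text{comm}}$, and the atypical-event term shrinks from $6M^2K\Delta_c\log_2(T)$ to $2M^2K\Delta_c\log_2(T)$ because $P_s$ is now governed by Lemma~\ref{lem:success_est} alone. The only residual discrepancy is the constant-level bookkeeping you already flag (e.g., whether a length-$L$ uncoded message costs $L$ slots or is absorbed into the stated constants), which is no less precise than the paper's own treatment.
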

We see that the asymptotic regret of the collision-sensing EC3 algorithm is $R_{\text{cs}}(T){\leq}O\left(\sum_{k> M}\frac{\log(T)}{\mu_{(M)}-\mu_{(k)}}\right)$, which is of the same order as the lower bound in Eqn.~\eqref{eqn:lower_bound}.

\subsection{Practical error-correction codes}\label{subsec:code}

In practice, several existing coding techniques, although not optimal in the error exponent sense, can achieve the error rate of $\frac{1}{T}$ with a coding length $N=\Theta(\log(T))$. This leads to {longer coding lengths (and thus higher communication cost) than using $N'(L)$}, but does not change the regret scaling. {For a message of length $L$}, detailed theoretical analysis with decoding error bounds for practical coding techniques reveals the following code length requirement: repetition code $N_{rep}=L\lceil\frac{8\sigma^2\log(2LT)}{(\mu_{\min}-\nu_{\max})^2}\rceil$, Hamming code $N_{ham}=\frac{7L}{4}\lceil\frac{4\sigma^2\log(6LT)}{(\mu_{\min}-\nu_{\max})^2} \rceil$ and convolutional code $N_{con}=3L\lceil\frac{16\sigma^2\log(2^7LT)}{7(\mu_{\min}-\nu_{\max})^2} \rceil$. See Appendix~\ref{app:code} for details. It is worth noting that the difference becomes more prominent when $T$ is large, which is verified in the experiments.

\subsection{Unknown time horizon}
In the EC3 algorithm described in Section \ref{sec:alg}, prior knowledge of the time horizon $T$ is required. We note that this assumption can be relaxed with the standard doubling trick \cite{auer2010ucb,besson2018doubling}, which leads to an anytime version of the EC3 algorithm that can maintain the regret of order $O(\log(T))$ without a known time horizon.

\subsection{Refined collision-dependent reward model}
\label{subsec:refine}

The model in Section~\ref{sec:prob} assumes that the reward distribution when collision happens (i.e., $B_k$ for arm $k\in[K]$) does not depend on how many players are involved in the collision. In reality, however, it is likely that more players involved in the collision leads to worse rewards for all of them. Hence, one can adopt a more refined reward distribution that depends on the number of colliding players $\gamma$ as follows: if there are $\gamma\geq 2$ players involved in the collision on arm $k$, the reward $X_{k,\gamma}^{B}$ is sampled from a distribution $B_{k,\gamma}$ with $\mathbb{E}[X_{k,\gamma}^{B}]=\nu_\gamma[k]$ and when $\gamma\leq \omega$, $\nu_\gamma[k]\geq \nu_{\omega}[k]$. We further make an assumption\footnote{{This assumption avoids the situation that the sum reward of two players colliding on the same arm is higher than these two players playing their optimal non-colliding arms. It is an interesting future research direction to design MP-MAB algorithms that do not depend on this assumption.}} that the optimal arm-player assignment always have at most one player on each of the top-$M$ arms without collision. In other words, the optimal choice has a mean reward of $\sum_{m\in[M]}\mu_{(m)}$.

The EC3 algorithm can be directly applied to this refined collision-dependent reward model, because the communication phase is designed to strictly have $2$ players in collision while all three other phases are independent of the number of colliding players. The regret analysis also directly applies because for typical event (successful initialization, communications, and estimations), no more than $2$ players will create collision while for the atypical event, we have enforced a worst-case (linear) regret, meaning that it applies to an arbitrary number of colliding players.

\subsection{{Extension of Assumption \ref{aspt:bounds}}}\label{subsec:ext_aspt1}

For the sake of a clear regret definition and a better illustration of the EC3 scheme and the proof, Assumption \ref{aspt:bounds} is enforced in Section \ref{sec:prob}. It assumes that the upper bound for mean rewards upon collisions of all arms is strictly lower than the lower bound for the mean rewards with no collisions at arms\footnote{{Such  assumption is reasonable in practice, as channels upon collisions are often much worse than the ones without collisions.}}. However, this assumption is not fundamental, we can relax it to better characterize the collision-dependent rewards.

For example, we can assume such upper and lower bounds for each individual arm $k$ as $\nu[k]\leq \nu_{\max}[k] < \mu_{\min}[k]\leq \mu[k]$. Then, different error exponents (depending on $\nu_{\min}[k]$ and $\mu_{\max}[k]$) can be used to characterize each arm as a channel, e.g., $E_{r,k}$ for arm $k$, which further leads to the adoption of variable coding lengths for communication on arms. {A more careful communication loss characterization can also be performed, i.e., counting communication steps on each arm individually and using the specific $\Delta_{c}[k] = \mu_{(1)}-\nu[k]$ instead of the general $\Delta_c$ as the communication loss per step on arm $k$.} Such modifications do not fundamentally affect the original EC3 design but complicate notations and analysis unnecessarily, thus we focus on the simplified version in Assumption \ref{aspt:bounds}.

\subsection{More players than arms}

In practical applications it is possible to have more players than arms ($M>K$). In such scenarios, collisions are inevitable if all players have to play, and the players are in need of finding the arms with better rewards upon collisions, which poses a novel bandit learning problem \cite{bande2019multi}. Here we present some ideas of how to adapt EC3 to this setting. 

The action space for the players are denoted as $\mathcal{Z}=\prod_{m=1}^M\mathcal{Z}_m$, where $\mathcal{Z}_m=[K]$ is the individual action space for player $m$. The goal of the players now is to find the optimal matching in $\mathcal{Z}$ that maximizes their cumulative rewards. 
Inspired by \cite{boursier2019practical}, we treat each matching in $\mathcal{Z}$ as a `super-arm' and apply the EC3 algorithm correspondingly. Specifically, in the exploration phase, players can cooperatively and sequentially hop among the active matchings in $\mathcal{Z}$ to collect the reward statistics of these super-arms. The order of the sequential hopping can be pre-determined based on players' indices. Then, in the communication phase, the statistics of super-arms are sent to the leader, who eliminates the sub-optimal ones and sends back the exploration sequences for the next phase. During communication, error-correction coding can be similarly adopted to ensure communication quality. Lastly, when there is only one active super-arm, the players can begin exploration without further communications. We note that the theoretical analysis can be carried out for this scenario with a similar procedure in the proof of EC3 in Section \ref{sec:theory}.

\subsection{Other extensions}
There are several other interesting extensions for EC3 and the collision-dependent reward setting, which may inspire further research activities. First, a heterogeneous version of the problem can be studied, similar to the heterogeneous MP-MAB problem in \cite{bistritz2018distributed} but with collision-dependent reward distributions. This problem assumes different reward distributions for different players on the same arm, and we believe that the method in \cite{boursier2019practical} can be adapted in the framework of EC3. 
{Second, other than the stochastic reward setting, other MAB variants are also worth exploring in the case of multiple players. For example, adversarial rewards are studied in both collision-sensing \cite{alatur2019multi} and no-sensing \cite{bubeck2019non, shi2020no} setting, which is an interesting future direction for the collision-dependent reward model. 
Third, it is interesting but also challenging to remove Assumption \ref{aspt:bounds} in general. Some attempts have been made in \cite{bubeck2020coordination,bubeck2020cooperative} but with a focus on the instance-independent regret. It is worth exploring how to communicate without such assumptions while still getting an $O\left(\frac{1}{\Delta}\log(T)\right)$-style regret that approaches the centralized setting. We believe tools from communication theory related to information transmission over unknown channels can be useful.}

\section{Experiments}\label{sec:exp}
To validate the EC3 design, we have carried out numerical experiments using both synthetic and real-world datasets {with a focus on the no-sensing setting}.  All  results are obtained by averaging over $100$ independent runs. 

\subsection{Synthetic Dataset}
\label{subsec:exp_syn}
\begin{figure}[thb]
	\vspace{-0.1in}
	\setlength{\abovecaptionskip}{-2pt}
	\centering
	\includegraphics[width=0.9 \linewidth]{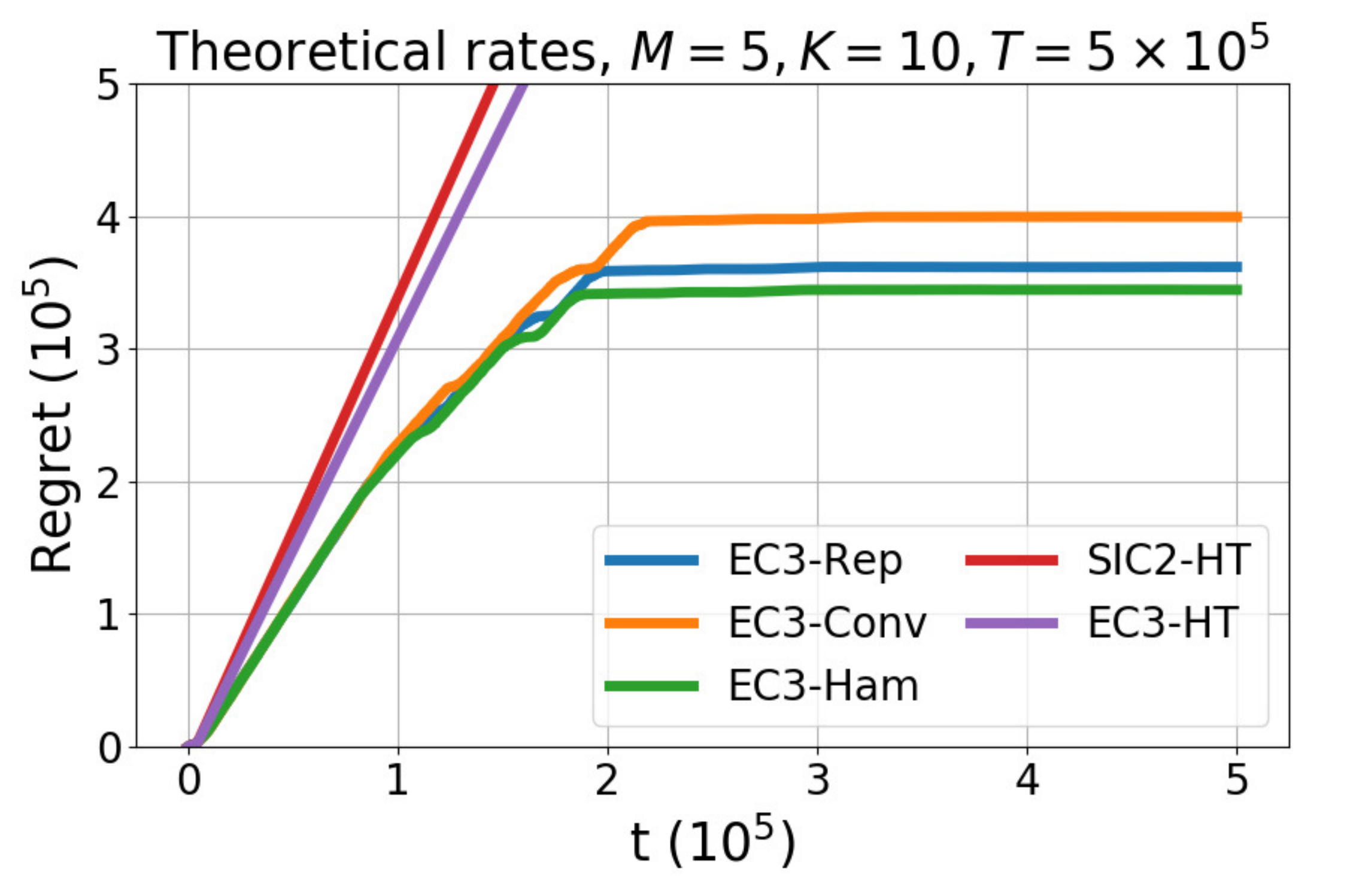}
	\caption{Regret comparison of different algorithms using a synthetic dataset and theoretically computed coding rates for EC3.}
	\label{fig:regret_compare}
	\vspace{-0.10in}
\end{figure}

Gaussian distributions $\sigma=0.2$ are used to generate a synthetic dataset and simulate the $0.2$-subgaussian reward distributions. {The bandit game is set to have $K=10$, and $M=5$ with no-collision mean rewards linearly distributed between $0.3$ and $0.84$ and collision mean rewards set as $0.1$ for all arms. Such an setting results in $\mu_{\min}=0.3, \nu_{\max} = 0.1$ and $\Delta = 0.06$. In each round of simulations, the mean rewards are uniformly interleaved among arms. } First, the performance of a modified SIC-MMAB2 \cite{boursier2018sic} is shown (labeled as `SIC2-HT'). {As SIC-MMAB2 is originally proposed for the no-sensing collision-eliminated reward setting, we enhanced it with hypothesis testing, where rewards less than the threshold $\frac{1}{2}(\mu_{\min}+\nu_{\max})$ are directly taken as collisions and otherwise no collisions}. {This enhancement enables SIC-MMAB2 to perform in the collision-dependent reward setting.} As shown in Fig. \ref{fig:regret_compare}, it cannot successfully converge due to a large amount of communication errors during communications. A similar phenomenon can be observed by adopting EC3 with hypothesis testing instead of coding (labeled as `EC3-HT'), which also has an almost linear regret. {These results illustrate the failure of using only hypothesis testing but not coding during communications with the collision-dependent rewards, which reinforces our intuition that the boundary between collision and non-collision rewards is very murky in this setting.}
	
Then, we perform the full version of EC3 with the coding, where three ``off-the-shelf'' error-correction codes: repetition code, Hamming code and convolutional code have been included. Details of the specific codes are provided in Appendix~\ref{app:code}. We first adopt the coding lengths computed theoretically in Section \ref{subsec:code} for these three different codes, which lead to the coding rates to be approximately: $R_{c,rep} \approx 6.25\times 10^{-3}$ for repetition code, $R_{c,ham}\approx 6.63\times 10^{-3}$ for Hamming code and $R_{c,conv}\approx 5.62\times 10^{-3}$ for convolutional code. As shown in Fig.~\ref{fig:regret_compare}, we see that the EC3 algorithm successfully converges with all three codes, which indicates the success of incorporating coding into communications and the effectiveness of the EC3 scheme.

{Although using the theoretically computed coding rates allows a guaranteed performance as in Fig. \ref{fig:regret_compare}, the inequalities used in the coding length computations are typically loose, which in practice result in unnecessarily small coding rates and costly communications. Thus, it is important to use more suitable coding rates in practice in order to control the communication cost, which is the main focus of the following experiments. We hence adopt a fixed coding rate of $R_c \approx 0.018$, which is much higher than the theoretically computed ones, to evaluate these coding schemes. As shown in Fig. \ref{fig:regret_compare_2}, EC3 still converges with Hamming code and has a lower regret compared with Fig.~\ref{fig:regret_compare}, which coincides with the intuition that the theoretically computed coding length is practically unnecessary. However, with repetition code and convolutional code, the regrets trend upwards, which indicate a certain amount of communications have failed with these two coding schemes under this ``extreme'' coding rate.}

\begin{figure}[thb]
	\vspace{-0.1in}
	\setlength{\abovecaptionskip}{-2pt}
	\centering
	\includegraphics[width=0.9 \linewidth]{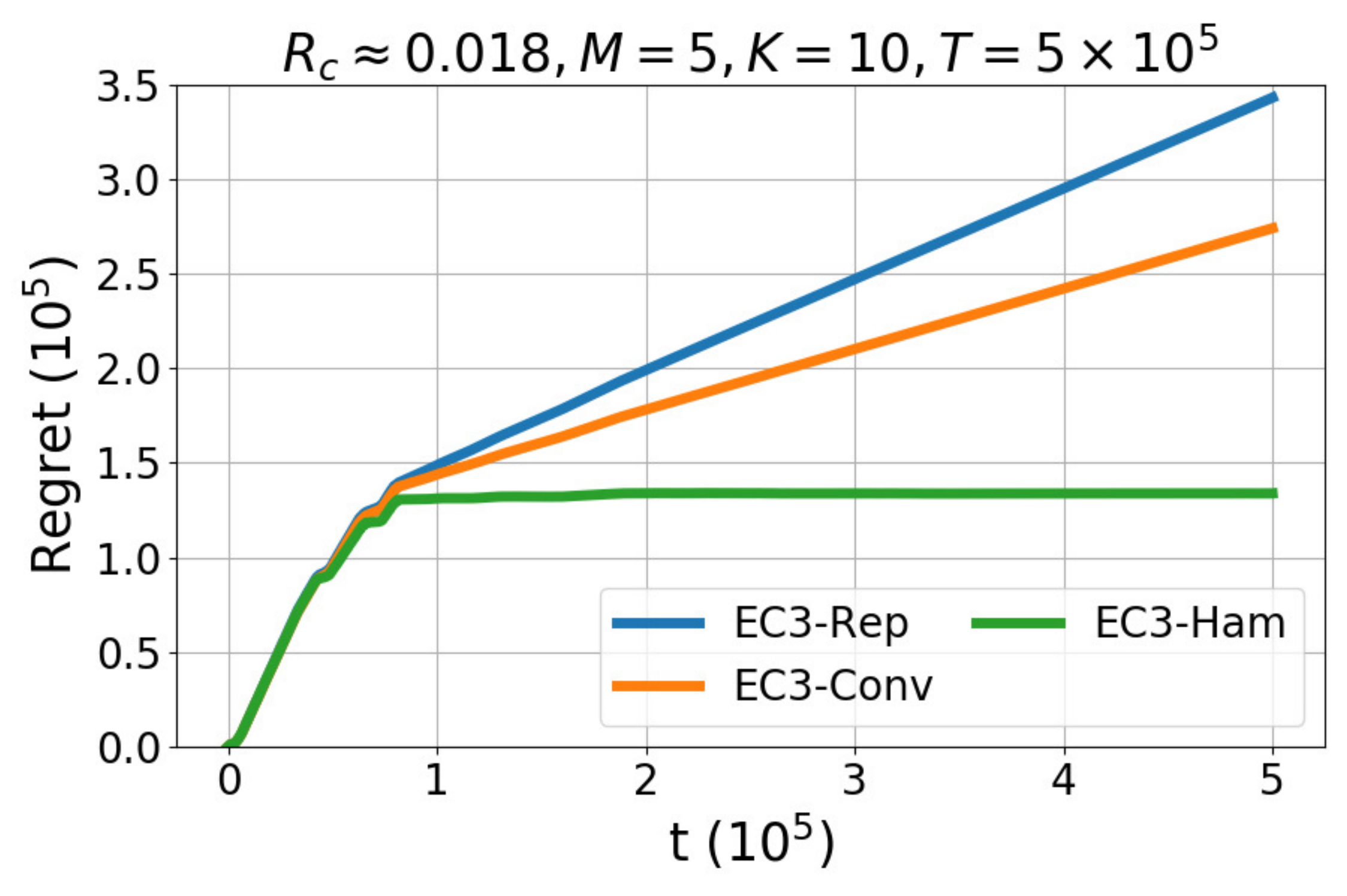}
	\caption{Regret comparison of {EC3 with different coding techniques} using a fixed high coding rate and a synthetic dataset.}
	\label{fig:regret_compare_2}
	\vspace{-0.10in}
\end{figure}

\begin{figure}[thb]
	\vspace{-0.1in}
	\setlength{\abovecaptionskip}{-2pt}
	\centering
	\includegraphics[width=0.9 \linewidth]{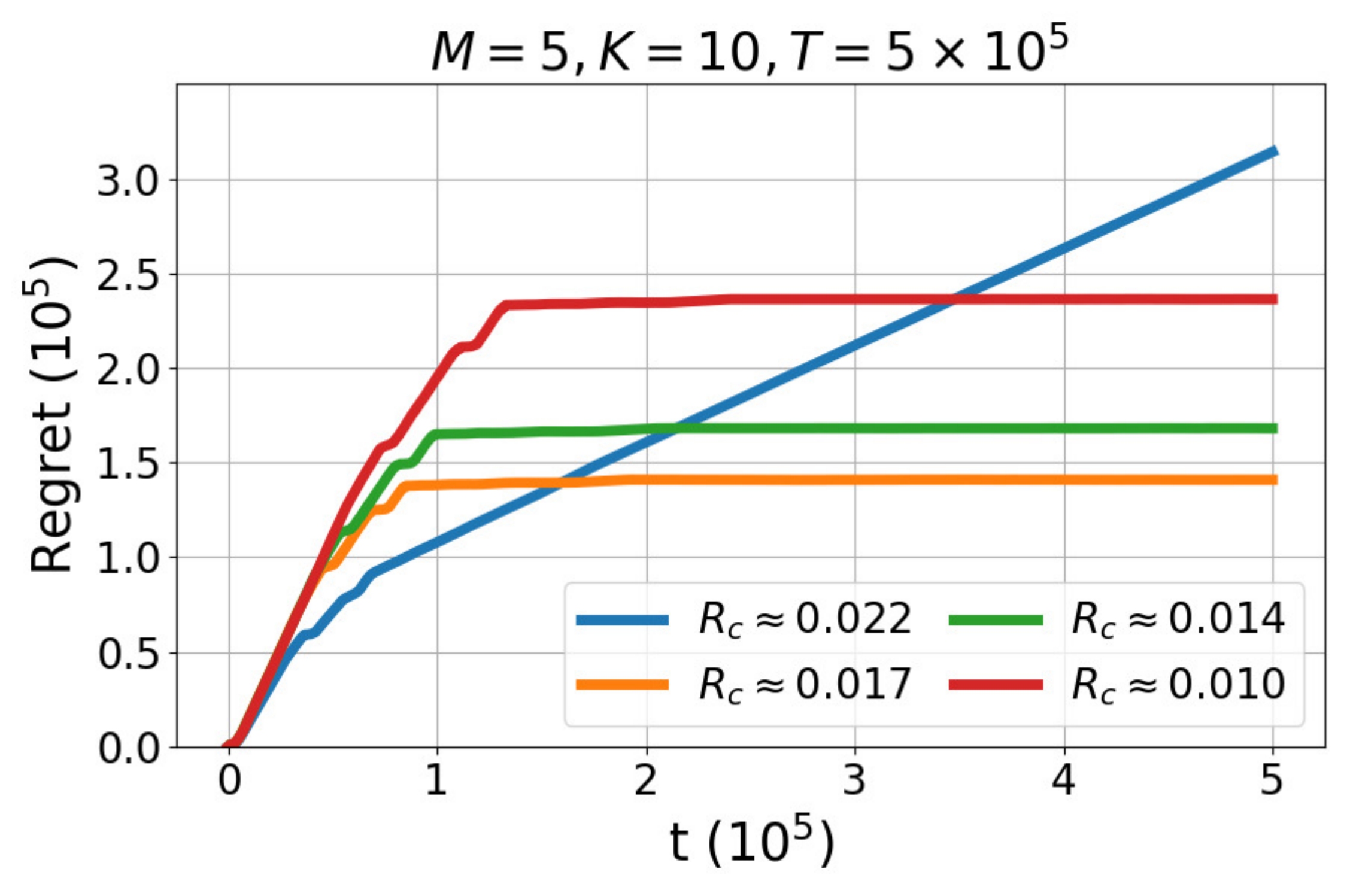}
	\caption{{Regret comparison of EC3 with {Hamming code} using different coding rates and a synthetic dataset.}}
	\label{fig:ham_rate}
	\vspace{-0.10in}
\end{figure}

Next we focus on Hamming code and test performance of EC3 using it with different coding rates. A very interesting tradeoff is observed from Fig.~\ref{fig:ham_rate}: when the coding rate decreases (e.g., from $0.022$ to $0.017$), the error-correction ability increases and there are fewer errors in the communication phases, which improves the performance. However, as we further decrease the rate (e.g., below $0.017$), the additional time slots needed to communicate the increased coded bits outweigh the benefit of improved error-correction ability, resulting in an increased overall regret. We further characterize this important tradeoff between communication regret and bit error rate in Fig.~\ref{fig:ham_error_regret}, where we see that although the bit error rate is monotonically increasing with the coding rate, the overall regret first decreases and then increases. This result highlights the importance of choosing a proper coding rate; our experiments show that $R_c=\frac{(\mu_{\min}-\nu_{\max})^2}{4\log(T)}$ usually balances error rate and communication regret well.

\begin{figure}[thb]
	\vspace{-0.1in}
	\setlength{\abovecaptionskip}{-2pt}
	\centering
	\includegraphics[width=0.9 \linewidth]{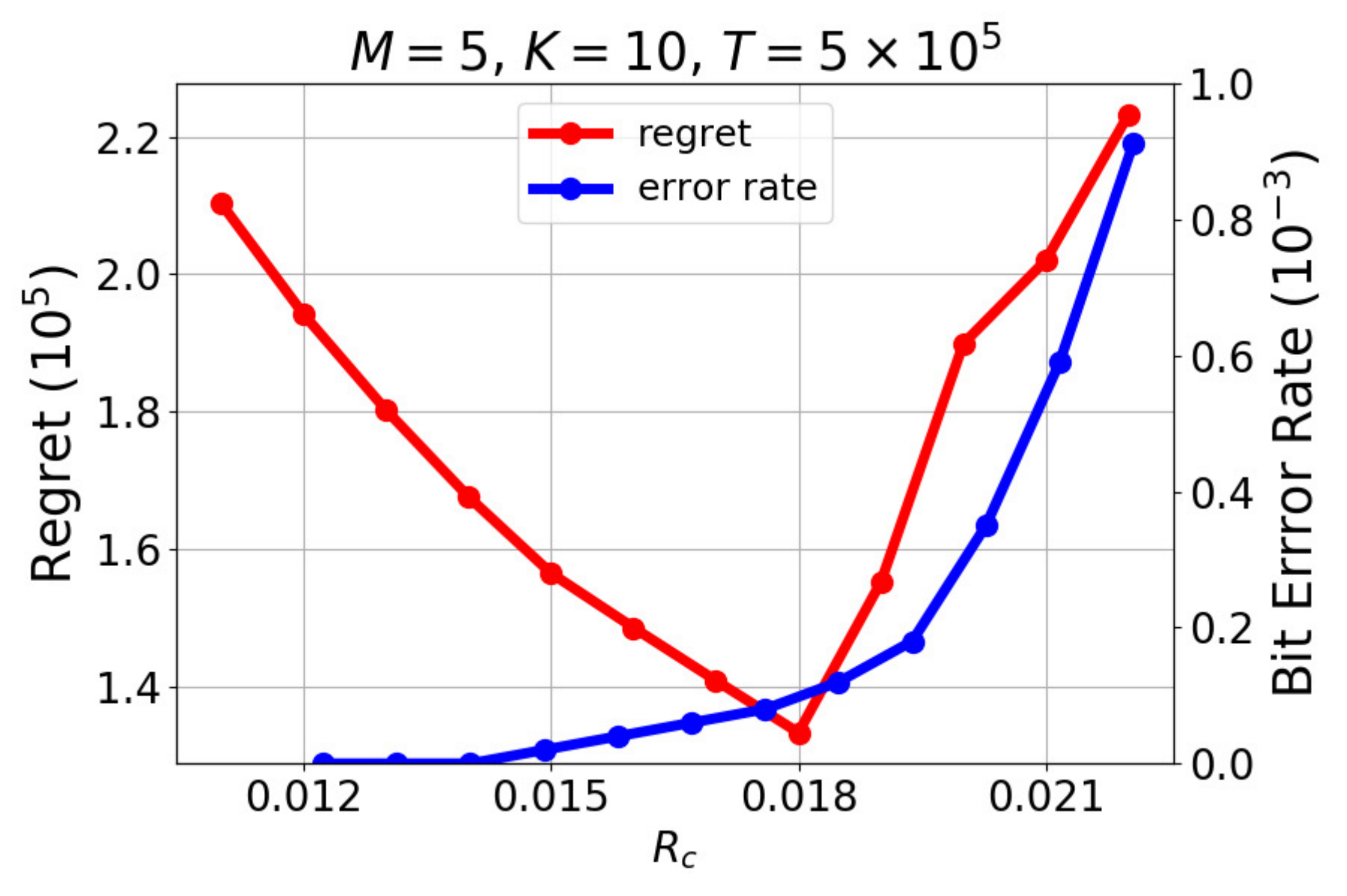}
	\caption{Regret and error probability {with Hamming code} as functions of coding rate.}
	\label{fig:ham_error_regret}
	\vspace{-0.10in}
\end{figure}

{Furthermore, the EC3 algorithm with Hamming code is performed with varying time horizons. As shown in Fig. \ref{fig:regret_horizon}, it is clear that under two different coding rates, the regrets all increase sublinearly with the time horizon, which corroborate the regret analysis.}

\begin{figure}[thb]
	\vspace{-0.1in}
	\setlength{\abovecaptionskip}{-2pt}
	\centering
	\includegraphics[width=0.9 \linewidth]{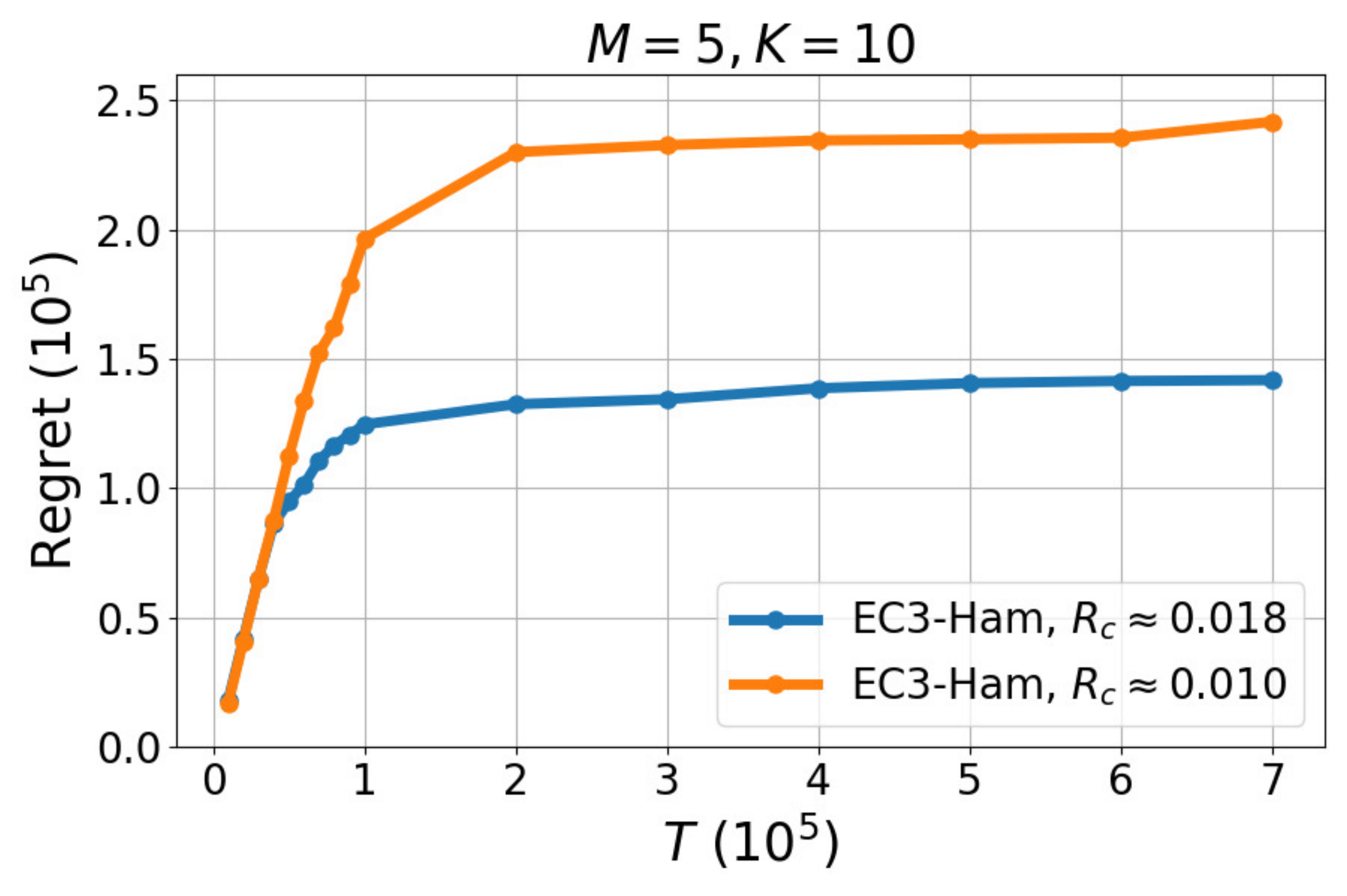}
	\caption{Regret with Hamming code as functions of time horizon.}
	\label{fig:regret_horizon}
	\vspace{-0.10in}
\end{figure}

{
Lastly, as the collision-eliminated reward setting is a special case of the general collision-dependent reward setting, i.e.,  $\nu_{\max} = 0$, we have also compared SIC-MMAB2 \cite{boursier2018sic} and EC3 in this setting. Fig. \ref{fig:regret_eliminate} shows that with the same no-collision mean rewards used in above experiments and zero rewards for collisions, although both algorithms converge successfully, EC3 with Hamming code with $R_c\approx 0.022$ outperforms SIC-MMAB2 significantly\footnote{{The coding rate $R_c\approx 0.022$ is feasible here but not in Fig.~\ref{fig:ham_rate} because in the collision-eliminated reward setting, the channel model is simpler as discussed in Section \ref{sec:info} where a lower error-correction capacity is sufficient.}}.
}
\begin{figure}[thb]
	\setlength{\abovecaptionskip}{-2pt}
	\centering
	\includegraphics[width=0.9 \linewidth]{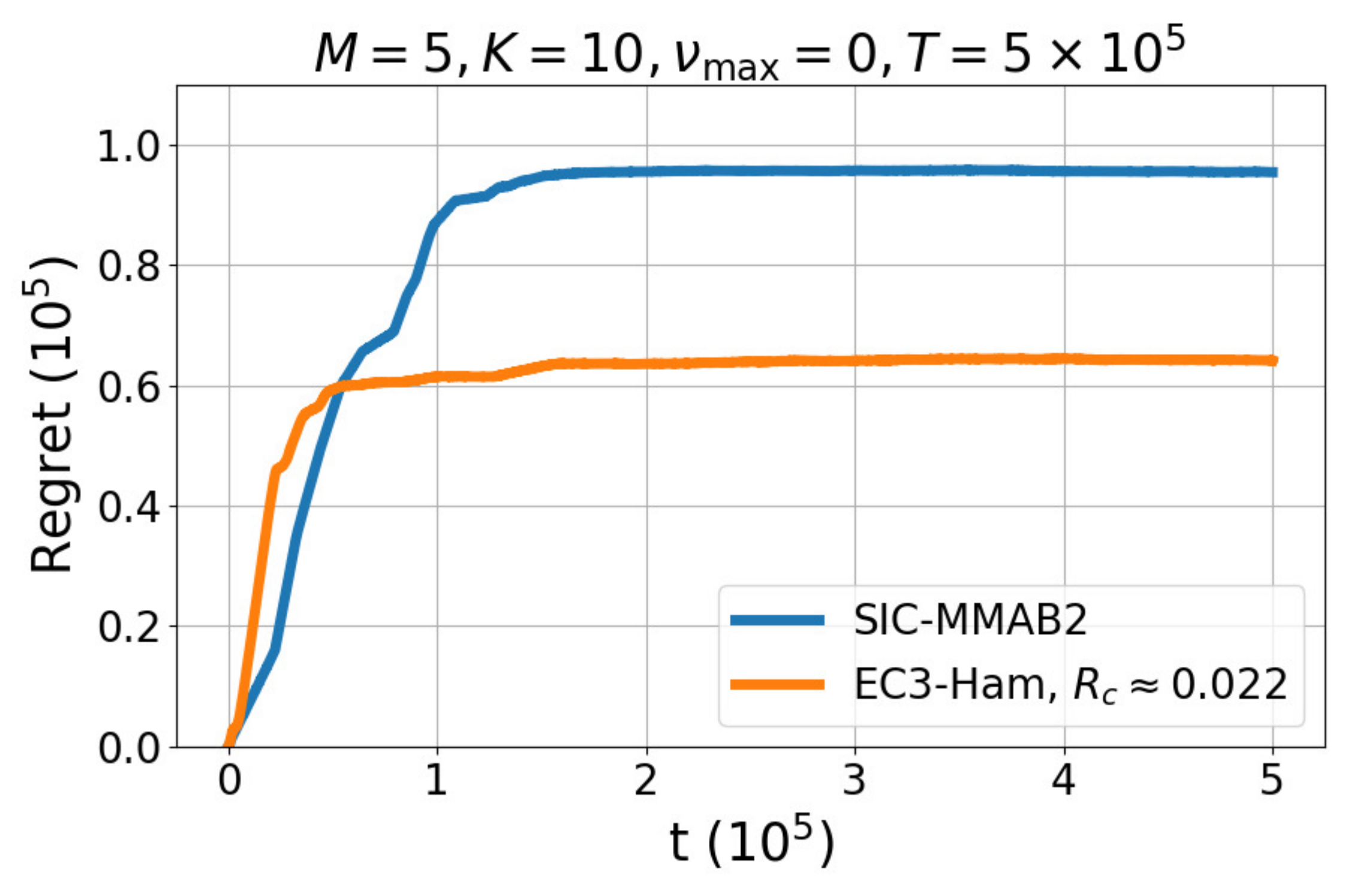}
	\caption{{Regret comparison of different algorithms in the collision-eliminated reward setting using a synthetic dataset.}}
	\label{fig:regret_eliminate}
	\vspace{-0.10in}
\end{figure}

\subsection{Real-world Datasets}

Due to the lack of applicable real-world datasets for cognitive radio systems, we adopt two real-world datasets from other scenarios: the NYC taxi dataset and Movielens dataset. Both represent relatively difficult bandit games that are useful in verifying the performance of EC3. These two datasets are used to mimic the underlying realization of channel quality of different channels in the application of cognitive radio. Since both datasets are not collected with MP-MAB and collision-dependent rewards in mind, we will discuss how we use the datasets for our purposes, and report the findings.

\begin{figure}[htb]
    \vspace{-0.15in}
    \setlength{\abovecaptionskip}{-2pt}
	\centering
	\subfigure[Low coding rate $R_c\approx 0. 0011$]{\setlength{\abovecaptionskip}{-2pt} \includegraphics[width=0.9\linewidth]{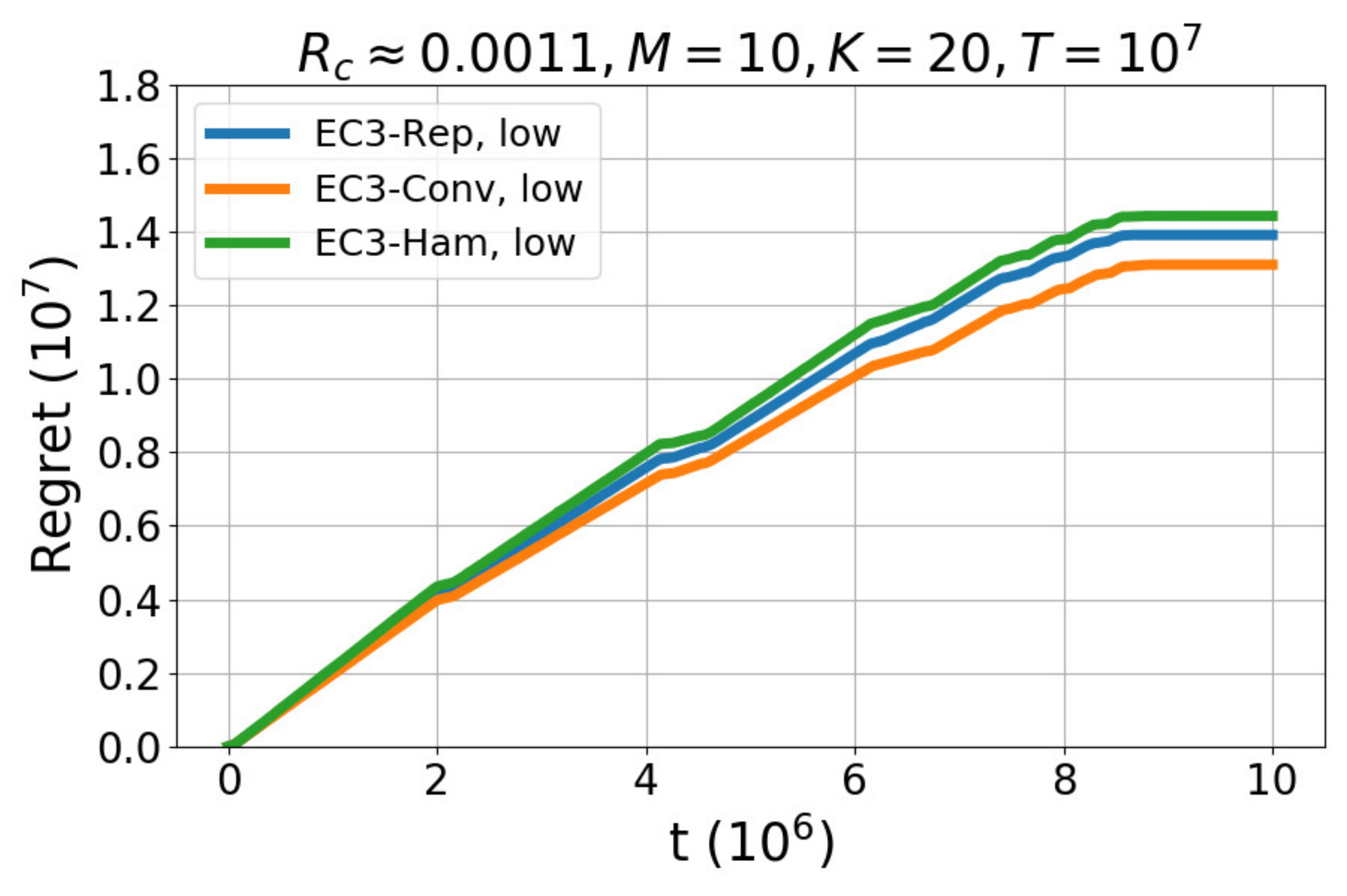}\label{fig:regret_ride_low}}
	\subfigure[High coding rate $R_c\approx 0. 0016$]{\setlength{\abovecaptionskip}{-2pt} \includegraphics[width=0.9\linewidth]{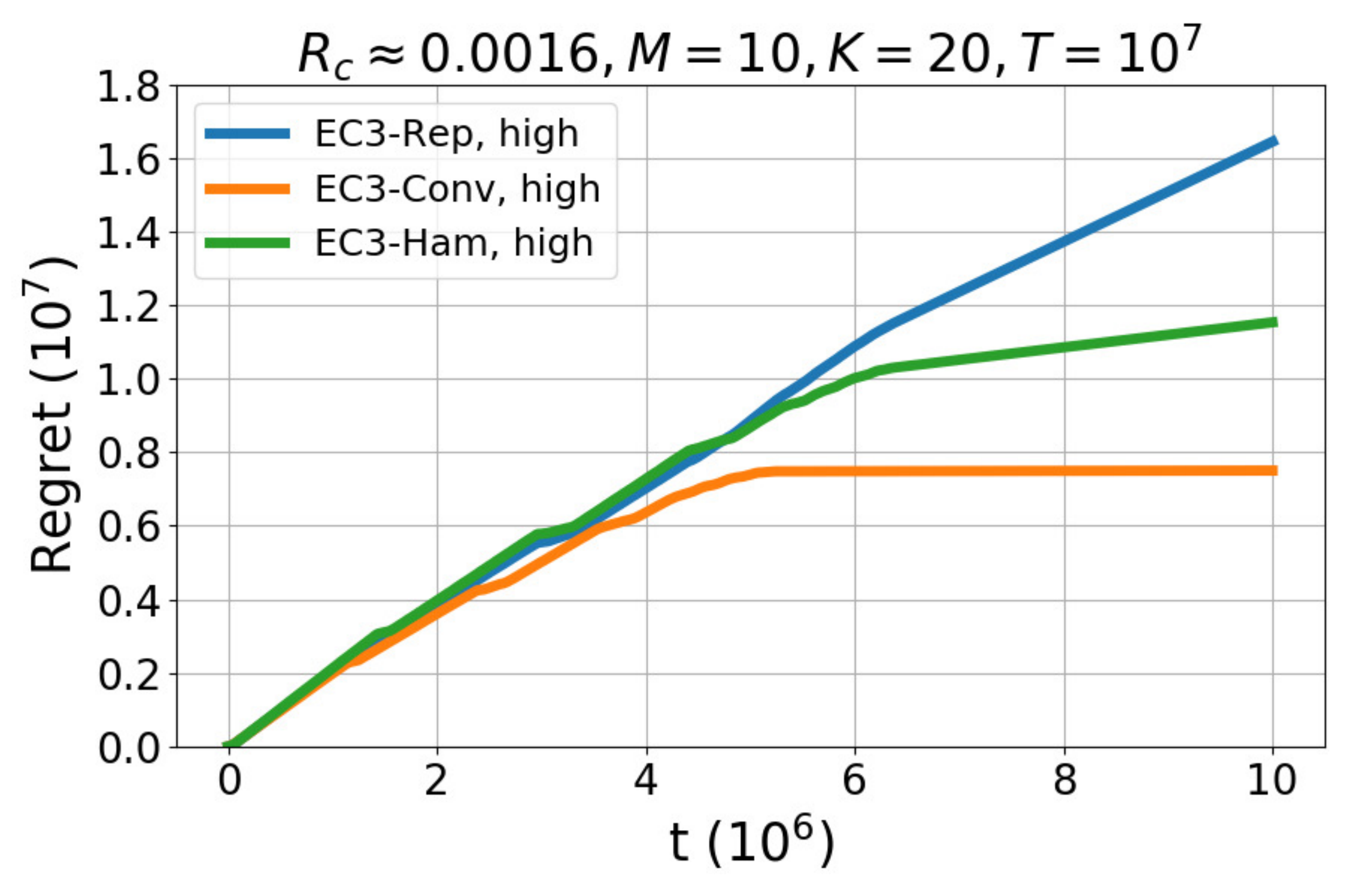}\label{fig:regret_ride_high}
	\vspace{-0.2in}}
	\caption{Regret comparison with different coding techniques using two coding rates and the NYC taxi dataset. }
	\label{fig:ride_16}
	\vspace{-0.1in}
\end{figure}

\textbf{NYC Taxi Dataset.} The taxi rides data from New York City in June, 2019 \cite{TLC2019DATA} consists of over 7 million taxi rides records of $265$ areas. The areas are randomly grouped into $40$ groups and assigned with rewards at time $t$ (minute) according to the number of taxi trips which took place in them. The rewards are then normalized to $[0,1]$, which forms distributions satisfying the $\frac{1}{2}$-subgaussian property. {The groups of areas can be interpreted as different channels in the cognitive radio setting, and the original taxi services as primary users' usage. More services in a certain group of areas are viewed as more intensive primary users' activities on a certain channel, which leads to lower rewards for the secondary users that we assumed to be engaged in the game.
}

The reward sequence is then replicated to create a final reward sequence of length $T=10^7$. The $20$ groups with larger reward means are used as the no-collision rewards, and the remaining $20$ groups are used as the collision rewards. $M=10$ players are assumed to engage in the game. The final reward sequence has $\mu_{\min}\approx 0.67$, $\nu_{\max}\approx 0.60$ and $\Delta\approx 0.03$. As shown in Fig.~\ref{fig:ride_16}, {with a low coding rate $R_c\approx 0.0011$, EC3 using all three codes successfully converges. With a higher coding rate $R_c\approx0.0016$, the optimal arms are perfectly learned by EC3 using convolutional code with a lower regret, while a slightly upward-trending regret for Hamming code indicates a small but non-negligible amount of communication errors have occurred, and the rapidly increasing regret of repetition code is due to its poor error-correction capability.}

\begin{figure}[htb]
\vspace{-0.1in}
	\setlength{\abovecaptionskip}{-2pt}
	\centering
	\subfigure[Low coding rate $R_c\approx 0. 0012$]{ \includegraphics[width=0.9\linewidth]{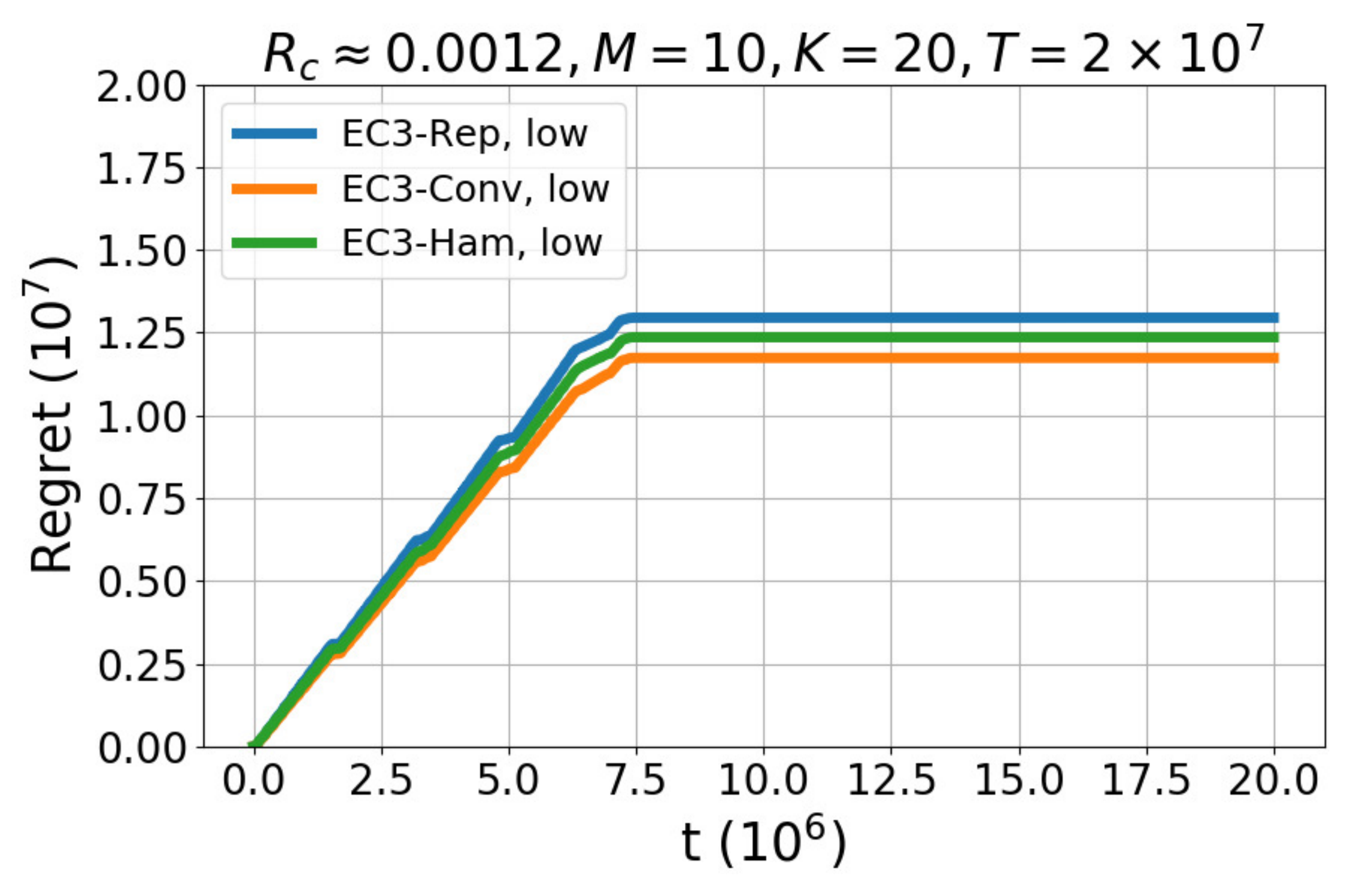}\label{fig:regret_movie_low}
	\vspace{-0.1in}}
	\subfigure[High coding rate $R_c\approx 0. 0017$]{ \includegraphics[width=0.9\linewidth]{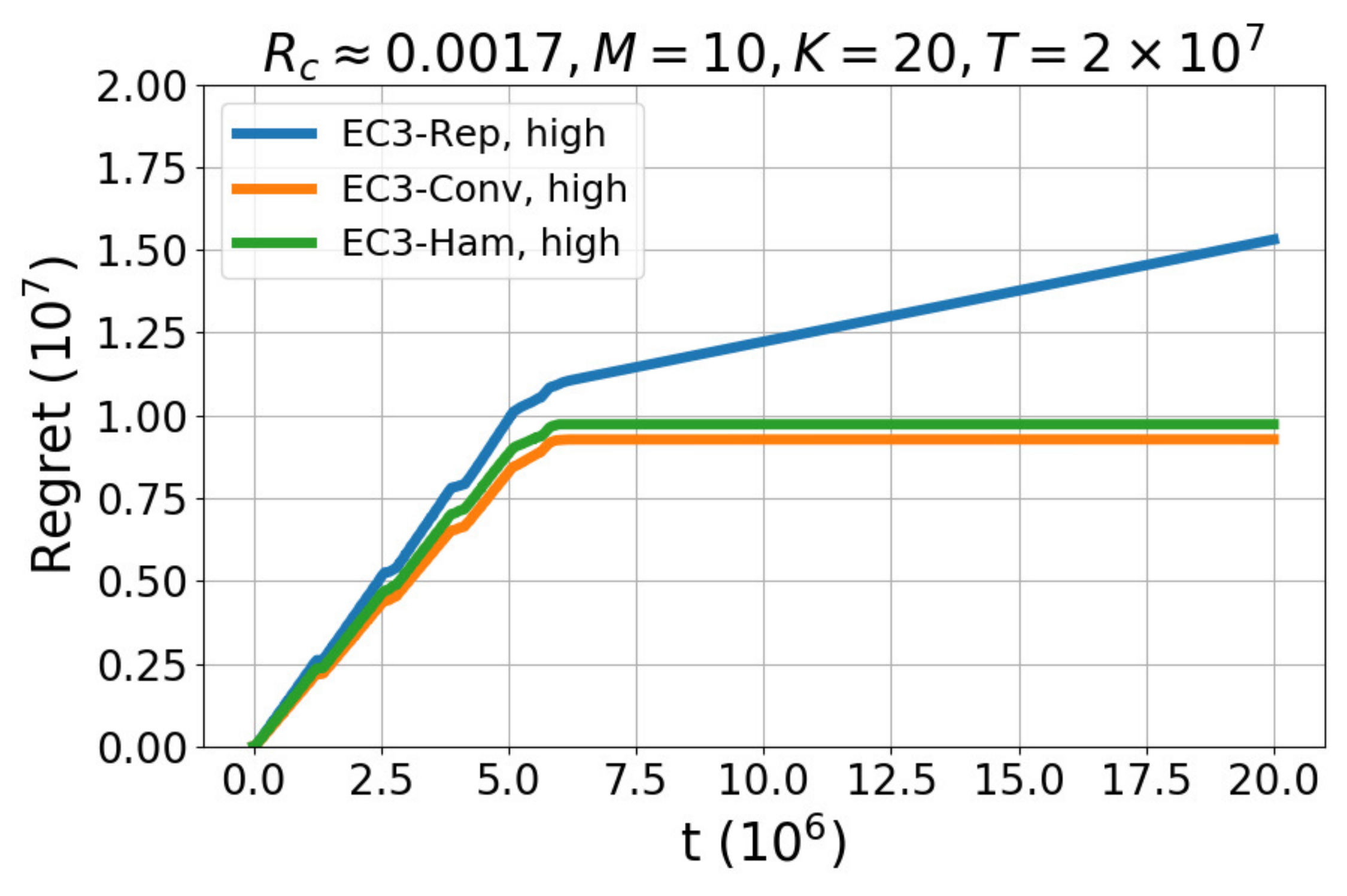}\label{fig:regret_movie_high}}
	\caption{Regret comparison with different coding techniques using two coding rates and the MovieLens dataset.}
	\label{fig:movie_17}
	\vspace{-0.1in}
\end{figure}
\textbf{MovieLens Dataset.} The EC3 algorithm is also evaluated on the popular MovieLens dataset \cite{harper2016movielens}. It consists of watching data of more than $2\times 10^4$ movies from over $10^5$ users between January 09, 1995 and March 31, 2015. The same grouping and reward-assigning procedures described above are used as pre-processing. {Similar interpretations can be applied as follows. The groups of movies can be interpreted as different channels in the cognitive radio setting. More original users watching a certain group of movies correspond to more intensive activities of primary users on this channel, which lowers the reward of this channel for secondary users.}

This reward sequence is replicated to a final length $T=2\times 10^7$. The no-collision and collision reward sequences are also represented by groups with different means of rewards. $M=10$ players are assumed to engage in the game. The final reward sequence has $\mu_{\min}\approx 0.87$, $\nu_{\max}\approx0.85$ and $\Delta\approx 0.04$. {As illustrated in Fig.~\ref{fig:movie_17}, with a low coding rate $R_c\approx 0.0011$, EC3 with all three codes successfully find the optimal arms. With a high coding rate $R_c\approx 0.0017$, convolutional code and hamming code still guarantee successful communications and lead to lower regrets, however, non-negligible communication errors occur with repetition code and the regret line trends upward.}  It can also be observed that with a larger $T$ in real-world datasets, convolutional code and Hamming code have better performance than repetition code, which corroborates our theoretical analysis.

\section{Conclusion}
\label{sec:conc}
In this work, we have introduced the decentralized MP-MAB problem with collision-dependent rewards, and proposed the EC3 algorithm to solve this new model {with a focus on the} no-sensing setting. An information-theoretic channel model was introduced for implicit communications, and random coding error exponent was utilized for the optimal communication in the no-sensing setting. With this careful communication design, quantized arm statistics can be shared among players at a level that is reliable enough to prevent the communication regret from dominating the overall regret. By expanding the exploration phases and {adaptively quantizing statistics for communication}, the theoretical analysis showed that EC3 can approach the lower bound of the centralized MP-MAB for the collision-dependent reward model.  Experiments with both synthetic and real-world datasets and several practical error-correction codes proved the superiority of the EC3 algorithm under different bandit configurations, and highlighted the efficiency of coding in the no-sensing setting.

\section*{Acknowledgement}
We thank Wei Xiong for assistance with the experiments. 


\appendices
\section{Proof of Corollary \ref{col:coding_length}}\label{app:coding_length}
{
\begin{proof}
    With Eqn. \eqref{eqn:error_exponent} in Theorem \ref{thm:error_exponent}, to get an error probability lower than $\frac{1}{T}$ in transmitting a message of a fixed length $L$ is equivalent to let the block length $N$ satisfy
    {\small\begin{align}\label{eqn:coding_req}
        \exp\left[-NE_r\left(\frac{L}{N}\right)\right]\leq \frac{1}{T}\Leftrightarrow NE_r\left(\frac{L}{N}\right)\geq \log(T).
    \end{align}}%
    The optimal coding length in the error exponent sense, i.e.,  $N'(L)$, can thus be expressed as $N'(L) =\arg\min_{N}\left\{NE_r\left(\frac{L}{N}\right)\geq \log(T)\right\}$.
    Since $E_r(R_c)$ is monotonically decreasing with the coding rate $R_c = \frac{L}{N}$ \cite{gallager1968information}, 
    it is easy to verify that there is a code with block length $N=\max\left\{\frac{L}{C-\varphi},\frac{\log(T)}{E_r(C-\varphi)}\right\}$ that satisfies the condition in Eqn.~\eqref{eqn:coding_req}, where $\varphi$ is an arbitrary constant in $(0,C)$. Thus, we can reach the conclusion that $N'(L)\leq \max\left\{\frac{L}{C-\varphi},\frac{\log(T)}{E_r(C-\varphi)}\right\}$.
\end{proof}
}

\section{Details on error-correction codes}
\label{app:code}
Three representative coding techniques are applied in the experiments. Their details and some preliminary analysis are provided in this appendix section.

\subsection{Repetition Code}

In the no-sensing setting, although a single sample reward from an arm can take any arbitrary value in the support of the corresponding distribution, the average sample mean concentrates to the true reward mean. With this observation, a carefully designed repetition code is applied as follows. 

\noindent\textbf{Coding scheme.} Each information bit is repeated into a bit string of length $N_0=\lceil\frac{8\sigma^2}{(\mu_{\min}-\nu_{\max})^2}\log(2LT)\rceil$ at the encoder. At the receiver, a sample mean of the receive values during the $N_0$ times steps, {denoted as $\hat{\mu}$}, is calculated. If the sample mean is larger than the threshold $\frac{1}{2}(\mu_{\min}+\nu_{\max})$, the decoder declares bit $1$; otherwise it declares bit $0$. 

\noindent\textbf{Analysis.} The {1-0} bit error probability can be bounded as:
{\small\begin{align*}
p_1&=P({\boldsymbol{b}[l]}=1,{\hat{\boldsymbol{b}}[l]}=0)\\
&=P\left(\hat{\mu}\leq \frac{1}{2}(\mu_{\min}+\nu_{\max})\right)\\
&= P\left(\hat{\mu}-\mu[k]\leq \frac{1}{2}(\mu_{\min}+\nu_{\max})-\mu[k]\right)\\
&\leq P\left(\hat{\mu}-\mu[k]\leq -\frac{\mu_{\min}-\nu_{\max}}{2}\right) \\
&\leq \exp\left(-\frac{N_0(\mu_{\min}-\nu_{\max})^2}{8\sigma^2}\right)\\
&\leq \frac{1}{2LT}.
\end{align*}}%
A similar bound holds for $p_0=P({\boldsymbol{b}[l]}=0,{\hat{\boldsymbol{b}}[l]}=1)$. For the entire sequence, a union bound of $L$ bits leads to an upper bound for the communication error rate of $1/LT$. Thus, the overall code word length for a message of length of $L$ is:
\begin{equation*}\small
	N_{rep}=L\left\lceil\frac{8\sigma^2}{(\mu_{\min}-\nu_{\max})^2}\log(2LT)\right\rceil.
\end{equation*}

\subsection{Hamming Code}
To further improve the error-correction ability, a modified $(7,4)$ Hamming code can be applied. 

\noindent\textbf{Coding scheme.} It is a concatenated code, with the standard $(7,4)$ Hamming code as the inner code and a repetition code as the outer code.
\begin{itemize}
	\item Encoding: The standard (7,4) Hamming encoding matrix $\boldsymbol{G}$ is first used to encode a 4-bit message to a 7-bit codeword. Then we repeat each bit of the 7-bit codeword $A$ times, leading to a $7A$-bit codeword;
	\item Decoding: First by using the repetition code decoding rule with $\frac{1}{2}(\mu_{\min}+\nu_{\max})$ as the threshold, $7A$-bit coded message is decoded into $7$ bits. These $7$ bits are then decoded with matrix $\boldsymbol{H}$ (the standard (7,4) Hamming decoding matrix). The final output is a decoded $4$-bit message.
	$$ \small \boldsymbol{G}=
	\left(
	\begin{matrix}
	1& 1& 0& 1\\
	1& 0& 1& 1\\
	1& 0& 0& 0\\
	0& 1& 1& 1\\
	0& 1& 0& 0\\
	0& 0& 1& 0\\
	0& 0& 0& 1
	\end{matrix}
	\right),\ 
	\boldsymbol{H}=
	\left(
	\begin{matrix}
	1& 0& 1& 0& 1& 0& 1\\
	0& 1& 1& 0& 0& 1& 1\\
	0& 0& 0& 1& 1& 1& 1
	\end{matrix}
	\right)
	$$
\end{itemize}

\noindent\textbf{Analysis.} The repetition code can reduce the bit error probabilities to:
{\small
\begin{align*}
&p_1\leq \exp\left(-\frac{A(\mu_{\min}-\nu_{\max})^2}{8\sigma^2}\right);\\
&p_0\leq \exp\left(-\frac{A(\mu_{\min}-\nu_{\max})^2}{8\sigma^2}\right).
\end{align*}}
It can thus be viewed as a binary asymmetric channel (BSC) with crossover probability  \begin{equation*}\small
    p=\exp\left(-\frac{A(\mu_{\min}-\nu_{\max})^2}{8\sigma^2}\right).
\end{equation*}
When $p$ is small, the block error probability of the Hamming Code in BSC is bounded as \cite{mackay2003information}:
\begin{align*}
	p_B\simeq 21p^2.
\end{align*}
The overall error probability can be bounded using a union bound as:
\begin{equation*}\small
	P_e\leq \frac{L}{4}p_B\simeq \frac{21L}{4}p^2=\frac{21L}{4}\exp\left(-\frac{A(\mu_{\min}-\nu_{\max})^2}{4\sigma^2}\right).
\end{equation*}
By choosing $A=\lceil\frac{4\sigma^2}{(\mu_{\min}-\nu_{\max})^2}\log(6LT) \rceil$, the error probability is smaller than $\frac{1}{LT}$. Thus, the total codeword length for a message of length of $L$ is:
\begin{equation*}\small
	N_{ham}=\frac{7L}{4}\left\lceil\frac{4\sigma^2}{(\mu_{\min}-\nu_{\max})^2}\log(6LT) \right\rceil.
\end{equation*}

\subsection{Convolutional Code}
Convolutional code is another powerful coding technique for reliable communications. Here we again use a modified convolutional code, where we choose a convolutional code encoder as an inner code and a repetition code as the outer code. Viterbi decoding algorithm can be implemented in the decoder. Every convolutional code encoder can be characterized with parameter $R_{\text{conv}}$, $B_{\text{free}}$ and $d_{\text{free}}$.

\noindent\textbf{Coding scheme.}  The inner convolutional code used for experiments in Section \ref{sec:exp} is generated by a $(3,1,2)$-feedforward encoder with $R_{\text{conv}}=\frac{1}{3}$, $B_{\text{free}}=1$ and $d_{\text{free}}=7$.

\noindent\textbf{Analysis.}  Similar to the analysis of Hamming Code, communication with the modified convolutional codes of length $A$ can be modeled as a BSC with crossover probabilities $p=\exp\left(-\frac{A(\mu_{\min}-\nu_{\max})^2}{8\sigma^2}\right)$. The bit error probability for this convolutional code is bounded as: 
\begin{align*}
	p_e&\simeq B_{\text{free}}2^{d_{\text{free}}}p^{d_{\text{free}}/2}\\
	&=B_{\text{free}}2^{d_{\text{free}}} \exp\left(-\frac{d_{\text{free}}A(\mu_{\min}-\nu_{\max})^2}{16\sigma^2}\right).
\end{align*}
The overall error probability can be bounded as:
\begin{equation*}\small
	P_e\leq Lp_e=B_{\text{free}}2^{d_{\text{free}}}L\exp\left(-\frac{d_{\text{free}}A(\mu_{\min}-\nu_{\max})^2}{16\sigma^2}\right).
\end{equation*}
With $A=\lceil\frac{16}{d_{\text{free}}(\mu_{\min}-\nu_{\max})^2}\log(2^{d_{\text{free}}}LT) \rceil$, the error probability is less than $\frac{1}{LT}$. The total code word length is then:
\begin{equation*}\small
	N_{con}=R_{\text{conv}}L\left\lceil\frac{16\sigma^2}{d_{\text{free}}(\mu_{\min}-\nu_{\max})^2}\log(B_{\text{free}}2^{d_{\text{free}}}LT) \right\rceil.
\end{equation*}

\section{Proof of Lemma \ref{lem:init_regret}}
\label{app:lem:init_regret}
\begin{proof}
	Each step in the first part of initialization can be viewed as receiving a bit $1$ from a potential player, which lasts $N'(1)$ steps. If the first part is successful, the second part consists of $M-1$ communication steps {of length $N'(\lceil\log_2(M)\rceil)$}. Denoting $\zeta_k=\mathds{1}\{\text{successful communication on arm $k$}\}$, the success probabilities of the both initialization parts, {i.e., $P_{i,1}$ and $P_{i,2}$,} are bounded as:
	{\small \begin{align*}
		P_{i,1}& = P(\forall k\in\{2,...,K\}, \zeta_k=1)\\
		&=1-P(\exists k\in\{2,...,K\},\zeta_k=0)\\
		&\geq 1-\sum_{k=1}^K P(\zeta_k=0)\\
		&=1-\frac{K}{T},
	\end{align*}}
	and
	{\small \begin{align*}
		P_{i,2} &=P(\forall m\in\{2,...,M\},\zeta_m=1)\\
		&=1-P(\exists m\in\{2,...,M\},\zeta_m=0)\\
		&\geq 1-\sum_{m=1}^M P(\zeta_m=0)\\
		&=1-\frac{M}{T}.
	\end{align*}}
	Thus, 
	$P_i=P_{i,1}P_{i,2}\geq 1-\frac{(M+K)}{T}$. {Each step leads to a loss of at most $\Delta_c$.} The initialization regret can be bounded as:
	\begin{align*}
    		R^{\text{init}}&\leq {\left(MKN'(1)+M^2N'(\lceil\log_2(M)\rceil)\right)}{\Delta_c} \\
    		&\leq 2MKN'(\lceil\log_2(K)\rceil){\Delta_c}.
	\end{align*}
	This completes the proof.
\end{proof}

\section{Proof of Lemma ~\ref{lem:success_com}}
\label{app:lem:success_com}
\begin{proof}
    {For time horizon $T$, there are at most $\log_2(T)$ communication phases, and in each phase, there are at most $K$ active arms. Since each follower only sends statistics to the leader and receives the length and the content of the set of accepted and rejected arms, there are at most $(K+2)M\log_2(T)+KM$ communications between the followers and the leader. The choice of coding length, i.e., $N'(L)$, guarantees that the decoding error happens with probability less than $\frac{1}{T}$ each time, thus by using a simple union bound, we have}
    \begin{align*}
        P_r&= 1-\mathbb{P}(\text{at least one message is decoded incorrectly})\\
        &\geq 1-\left[(K+2)M\log_2(T)+KM\right]\cdot \frac{1}{T}\\
        & \geq 1-\frac{3MK\log_2(T)}{T},
    \end{align*}
    which completes the proof.
\end{proof}
\section{Proof of Lemma \ref{lem:success_est}}
\label{app:lem:success_est}

\begin{proof}
    Lemma~\ref{lem:success_est} ensures event $A_3$ happens with a high probability, which further guarantees that the acceptance and rejection of arms are successful with a high probability. With {$1+Q_p$ bits to quantize sample means in phase $p$}, the probability that the sample mean exceeds the quantization range can be bounded. In the case of successful quantization, the estimation error consists of two parts: the quantization error and the sample uncertainty, which can be analyzed separately.

	With $1$ bit representing the integer part, the sample mean can be correctly quantized if the value does not exceed $2$. Thus, denoting $\chi_p^m[k]=\mathds{1}\{$successful quantization in round $p$ on arm $k$ by player $m\}$,  this probability can be bounded as:
	{\small
	\begin{align*}
	P(\chi_p^m[k]=0)&=P(\hat{\mu}_p^m[k]>2)\\
	&\leq P(\hat{\mu}_p^m[k]-\mu[k]>1)\\
	&\overset{(a)}{\leq} P\left(\hat{\mu}_p^m[k]-\mu[k]>\sqrt{\frac{2\log(T)}{T_p^m}}\right)\\
	&\overset{(b)}{\leq}  \frac{1}{T},
	\end{align*}}
	where inequality (a) is because $\max_p\left\{\sqrt{\frac{2\log(T)}{T_p^m}}\right\}\leq 1$ since $\min_p\{T_p^m\}=2\lceil\log(T)\rceil$ when $p=1$ and inequality (b) is from the subgaussian property.

	{With the choice of $Q_p = \left\lceil\log_2\left(\frac{1}{B_{T_p}}\right) \right\rceil \geq\log_2\left(\frac{1}{B_{T_p}}\right)$, the quantization error of arm $k$ in the case of $\chi_p^m[k]=1$, $\forall m\in[M]$ can be bounded as:}
	\begin{align*}
	&|\bar{\mu}_p^m[k]-\hat{\mu}_p^m[k]| \leq {2^{-Q_p}\leq B_{T_p},}
	\end{align*}
	and
    \begin{align*}
	&|\bar{\mu}_p[k]-\hat{\mu}_p[k]| = \frac{|\sum_{m=1}^M(\bar{\mu}_p^m[k]-\hat{\mu}_p^m[k])\cdot T_p^m|}{T_p}\leq {B_{T_p}}.
	\end{align*}
	Then, the overall gap between quantized mean and true mean can be analyzed from the two scenarios that the quantization is failed or successful. Denote $H$ as the overall rounds of explorations and communications {and $\chi[k]=\cap_{p=1}^H\cap_{m=1}^M \chi_p^m[k]$}, we first have:
	{\small\begin{align*}
		&P\left(\exists p \leq H,  |\bar{\mu}_p[k]-\mu[k]|>{2}B_{T_p}|\chi[k]=1\right)\\
		&\leq {\sum_{p=1}^H P\left( \left|\bar{\mu}_p[k]-\hat{\mu}_p[k]+\hat{\mu}_p[k]-\mu[k]\right| >{2}B_{T_p}\bigg|\chi[k]=1\right)}\\
		&\leq {\sum_{p=1}^H} P\left(|\bar{\mu}_p[k]-\hat{\mu}_p[k]|+|\hat{\mu}_p[k]-\mu[k]| >{2}B_{T_p}\bigg|\chi[k]=1\right)\\
		&\leq {\sum_{p=1}^H} P\left(|\hat{\mu}_p[k]-\mu[k]|> B_{T_p}\bigg|\chi[k]=1\right)\\
		&\leq \frac{2H}{T}.
	\end{align*}}
	Then, we can also get
	\small{
	\begin{align*}
		P(\chi[k]=0)\leq \sum_{p=1}^H \sum_{m=1}^{M_p}P\left(\chi_p^m[k]=0\right)\leq\frac{HM}{T}.
	\end{align*}}%
	Overall, we can conclude that
	\begin{align*}
	P&\left(\exists p \leq H, |\bar{\mu}_p[k]-\mu[k]|>{2}B_{T_p}\right)\\
	=&P\left(\exists p \leq H,  |\bar{\mu}_p[k]-\mu[k]|>{2}B_{T_p}|\chi[k]=1\right)\cdot P\left(\chi[k]=1\right)\\
	+&P\left(\exists p \leq H,  |\bar{\mu}_p[k]-\mu[k]|>{2}B_{T_p}|\chi[k]=0\right)\cdot P\left(\chi[k]=0\right)\\
	\leq& P\left(\exists p \leq H,  |\bar{\mu}_p[k]-\mu[k]|>{2}B_{T_p}|\chi[k]=1\right)+P\left(\chi[k]=0\right)\\
	\leq& \frac{(M+2)H}{T}.
	\end{align*}

	With at most $\log_2(T)$ rounds of communication and exploration phases and $K$ arms, the probability that event $A_3$ holds is:
	\begin{align*}
		P_c&\geq 1-\frac{(M+2)K\log_2(T)}{T}\geq 1-\frac{2MK\log_2(T)}{T},
	\end{align*}
	which completes the proof.
\end{proof}

\section{Proof of Lemma \ref{lem:expl_pull}}
\label{app:lem:expl_pull}

Lemma~\ref{lem:expl_pull} controls the number of times an arm is pulled before being accepted or rejected, and is essential to controlling the rounds of exploration and communication. The proof is similar to \cite[Proposition 1]{boursier2018sic}.
\begin{proof}
	For an optimal arm $k$, we denote $\Delta_k = \mu[k]-\mu_{(M+1)}$, and $s_k$ as the {smallest} integer such that ${8}B_{s_k}\leq \Delta_k$. Simple manipulation shows
	\begin{equation*}\small
	s_k=\frac{{128}\log(T)}{\Delta_k^2}\geq\frac{128\log(T)}{(\mu[k]-\mu_{(M+1)})^2}.
	\end{equation*}
	For some $p$ such that $T_{p-1}\leq s_k<T_p$, we have:
	\begin{align*}
	\Delta_{k} &\geq {8}B_{T_p}\\
	|\bar\mu_p[k]-\mu[k]|&\leq {2}B_{T_p}\\
	|\bar\mu_p[i]-\mu[i]|&\leq {2}B_{T_p},\text{ for all sub-optimal active arms } i,
	\end{align*}
	{where the second and third inequalities are based on the typical event.} With the above inequalities, we can get that for any suboptimal arm $i$, it holds that
	\begin{align*}
		\bar\mu_p[k]-2B_{T_p}&\geq \mu[k]-4B_{T_p}\\
		&\geq \mu[k]-\mu_{(M+1)}+\mu[i]-4B_{T_p}\\
		& = \Delta_{k}-8B_{T_p}+\mu[i]+4B_{T_p}\\
		&\geq \mu[i]+4B_{T_p}\\
		&\geq \bar\mu_p[i]+2B_{T_p},
	\end{align*}
	which means that at time $T_p$, arm $k$ is accepted. 
	
	Recall that the number of times an active arm has been pulled up to phase $p$ is $T_p = \sum_{q=1}^{p}M_q2^q\lceil \log(T)\rceil$. With a non-increasing $M_p$, it holds that $T_{p+1}\leq 3T_p$. Thus $T_p \leq 3s_k = \frac{384\log(T)}{(\mu[k]-\mu_{(M+1)})^2}$, and arm $k$ is accepted after at most $\frac{384\log(T)}{(\mu[k]-\mu_{(M+1)})^2}$ pulls. The proof for rejecting sub-optimal arms is similar with $\Delta_k = \mu_{(M)}-\mu[k]$.
\end{proof}

\section{Proof of Lemma \ref{lem:expl_decom}}
\label{app:lem:expl_decom}

\begin{proof}

	We first prove part 1) of Lemma~\ref{lem:expl_decom}. Conditioned on the typical event, for a sub-optimal $k$, from Lemma~\ref{lem:expl_pull}, $T_k^{\text{expl}}(T)\leq \min\left\{\frac{{384}\log(T)}{(\mu_{(M)}-\mu[k])^2},T\right\}$, so we have:
	{\small \begin{align*}
	(\mu_{(M)}-\mu[k])T^{\text{expl}}_{k}(T)&=\min\left\{\frac{384\log(T)}{\mu_{(M)}-\mu[k]},(\mu_{(M)}-\mu[k])T\right\}\\
	& \overset{(a)}{=}\min\left\{\frac{384\log(T)}{\delta},\delta T\right\}\\
	& \overset{(b)}{\leq}8\sqrt{6}\min\left\{\frac{8\sqrt{6}\log(T)}{\delta},\sqrt{T\log(T)}\right\},
	\end{align*}}
	where equality (a) denotes $\delta = \mu_{(M)}-\mu[k]$ and inequality (b) is because  $\min\left\{\frac{384\log(T)}{\delta},\delta T\right\}$ is maximized at $\delta = 8\sqrt{\frac{6\log(T)}{T}}$, which yields the first part.

To prove part 2) of Lemma~\ref{lem:expl_decom}, we first need to establish Lemma~\ref{appendix:lem_part1} and Lemma~\ref{appendix:lem_part2}.
\begin{lemma}
	\label{appendix:lem_part1}
	Define $\hat{t}_k$ as the number of exploratory pulls before accepting/rejecting arm $k$ and $H$ as the overall number of exploration and communication phases, conditioned on the typical event, it holds that	%
	{\small
	\begin{align*}
	\sum_{k\leq M}&(\mu_{(k)}-\mu_{(M)})\left(T^{\text{expl}}-T^{\text{expl}}_{(k)}\right)
	\leq\sum_{j>M}\sum_{k\leq M}\sum_{p=1}^H 2^p\\&\cdot\lceil \log(T) \rceil(\mu_{(k)}-\mu_{(M)})
	\mathds{1}\left\{\min\{\hat{t}_{(j)},\hat{t}_{(k)}\}\geq T_{p-1}\right\}.
	\end{align*}}
\end{lemma}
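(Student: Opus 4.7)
The plan is to compute $T^{\text{expl}} - T^{\text{expl}}_{(k)}$ directly for each optimal arm $(k)$ by a phase-by-phase accounting, and then translate the resulting ``shortfall'' per phase into a sum indexed by the still-active sub-optimal arms, which is precisely the structure on the right-hand side.

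First, I would condition on the typical event and split phases according to whether $(k)$ has already been accepted. Since inactive players fixate on previously accepted arms, if $\hat{t}_{(k)} < T_{p-1}$ then in phase $p$ arm $(k)$ is pulled at every one of the $K_p 2^p \lceil \log T \rceil$ time slots (by its dedicated inactive player), contributing $0$ to the missed slots. If instead $\hat{t}_{(k)} \geq T_{p-1}$, the exploration sequences $E_p^m$ are distinct cyclic shifts so the $M_p$ active players pull $M_p$ distinct active arms at every time slot; hence $(k)$ is pulled in exactly $M_p \cdot 2^p \lceil \log T \rceil$ of the $K_p \cdot 2^p \lceil \log T \rceil$ time slots, leaving $(K_p - M_p)\cdot 2^p \lceil \log T \rceil$ missed slots. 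Finally, during the exploitation window after phase $H$, all $M$ top arms are pulled every slot and contribute $0$.

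Second, I would use the typical event again to identify $K_p - M_p$ with the number of active sub-optimal arms. Under $A_1 \cap A_2 \cap A_3$ every accepted arm is one of the top-$M$ arms and there are $M - M_p$ of them, so the $M_p$ active players correspond to the $M_p$ remaining active optimal arms; therefore the active sub-optimal arms number $J_p = K_p - M_p = \sum_{j>M} \mathds{1}\{\hat{t}_{(j)} \geq T_{p-1}\}$. This is the one place where the typical event is actively used in the proof.

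Combining the two observations gives
\begin{equation*}
T^{\text{expl}} - T^{\text{expl}}_{(k)} \;=\; \sum_{p=1}^H 2^p \lceil \log T \rceil \, \mathds{1}\{\hat{t}_{(k)} \geq T_{p-1}\} \sum_{j>M} \mathds{1}\{\hat{t}_{(j)} \geq T_{p-1}\},
\end{equation*}
and using $\mathds{1}\{\hat{t}_{(k)} \geq T_{p-1}\}\mathds{1}\{\hat{t}_{(j)} \geq T_{p-1}\} = \mathds{1}\{\min\{\hat{t}_{(j)},\hat{t}_{(k)}\} \geq T_{p-1}\}$, then multiplying by the nonnegative weight $\mu_{(k)} - \mu_{(M)}$ and summing over $k \leq M$ produces exactly the claimed right-hand side (in fact as an equality, which implies the stated $\leq$). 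The only subtle step is the counting in the first paragraph: one must verify that the distinct cyclic shifts of $E_p^m$ indeed make each active arm receive exactly $M_p$ pulls per length-$K_p$ cycle and that no two active players ever collide on the same active arm during exploration. Both are structural consequences of the construction of $E_p^m$ in Algorithm~\ref{alg:overall} and the fixation-after-acceptance rule, so the lemma reduces to this bookkeeping once the phase-by-phase decomposition is set up.
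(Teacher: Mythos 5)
Your proposal is correct and follows essentially the same route as the paper: a phase-by-phase count of missed pulls for each still-active optimal arm, the identification $K_p - M_p = \sum_{j>M}\mathds{1}\{\hat{t}_{(j)}\geq T_{p-1}\}$, and the product-of-indicators rewriting into $\mathds{1}\{\min\{\hat{t}_{(j)},\hat{t}_{(k)}\}\geq T_{p-1}\}$. The only cosmetic difference is that you record the count as an equality while the paper states it directly as the one-sided bound $T^{\text{expl}}_k \geq T^{\text{expl}} - \sum_{p}(K_p-M_p)2^p\lceil\log(T)\rceil\mathds{1}\{\hat{t}_k > T_{p-1}\}$, which is all that is needed.
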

	To prove Lemma \ref{appendix:lem_part1} holds, we note that during phase $p$, if an optimal arm $k$ has already been accepted, it will be pulled $K_p2^p\lceil \log(T) \rceil$ times. If this arm is still active (i.e., $\hat{t}_k>T_{p-1}$), it will be pulled $M_p2^p\lceil \log(T) \rceil$ times, which means that it is not pulled $(K_p-M_p)2^p\lceil \log(T) \rceil$ times.
	Thus, it holds that $T^{\text{expl}}_k\geq T^{\text{expl}}-\sum_{p=1}^H 2^p(K_p-M_p)\lceil\log(T)\rceil\mathds{1}\{\hat{t}_k> T_{p-1}\}$. Noticing that $K_p-M_p = \sum_{j>M}\mathds{1}\{\hat{t}_{(j)}> T_{p-1}\}$, we have $T^{\text{expl}}_k\geq T^{\text{expl}}-\sum_{p=1}^H \sum_{j>M}2^p\lceil\log(T)\rceil \mathds{1}\{\min\{\hat{t}_{(j)},\hat{t}_{k}\}>T_{p-1}\}$, which proves the lemma. We note that this lemma converts the expression of the time that an optimal arm is not pulled into the time that a sub-optimal arm is pulled.

\begin{lemma}
	\label{appendix:lem_part2}
	Conditioned on the typical event, we have:
	{\small
	\begin{align}
		\label{eqn:applempart2}
	\sum_{k\leq M}\sum_{p=1}^H 2^p \lceil &\log(T) \rceil(\mu_{(k)}-\mu_{(M)})\mathds{1}\left\{\min\{\hat{t}_{(j)},\hat{t}_{(k)}\}\geq T_{p-1}\right\}\notag\\
	&\leq {93}\min \left \{\frac{\log(T)}{\mu_{(M)}-\mu_{(j)}},\sqrt{T\log(T)} \right\}.
	\end{align}}
\end{lemma}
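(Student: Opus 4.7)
The plan is to decouple the double sum into two stages: first control the inner sum over $p$ for a fixed pair $(k,j)$, and then control the resulting sum over $k\leq M$.

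For the inner sum, the indicator restricts $p$ to a prefix $\{1,\dots,p^*\}$, where $p^*$ is the largest phase with $T_{p^*-1}\leq \min\{\hat{t}_{(j)},\hat{t}_{(k)}\}$. The proof of Lemma~\ref{lem:expl_pull} already establishes the geometric-growth bound $T_p\leq 3T_{p-1}$ (because $M_p$ is non-increasing), while the trivial lower bound $T_{p^*}\geq \sum_{q=1}^{p^*} 2^q\lceil\log(T)\rceil$ (each $M_q\geq 1$) dominates the partial geometric series. Chaining these yields
\begin{equation*}
\sum_{p=1}^{H} 2^p\lceil\log(T)\rceil\,\mathds{1}\bigl\{T_{p-1}\leq \min\{\hat{t}_{(j)},\hat{t}_{(k)}\}\bigr\}\ \leq\ T_{p^*}\ \leq\ 3\min\{\hat{t}_{(j)},\hat{t}_{(k)}\}.
\end{equation*}
Hence the left-hand side of \eqref{eqn:applempart2} is at most $3\sum_{k\leq M}(\mu_{(k)}-\mu_{(M)})\min\{\hat{t}_{(j)},\hat{t}_{(k)}\}$, reducing the problem to bounding this single sum.

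For the outer sum, I would plug in the acceptance/rejection bounds supplied by Lemma~\ref{lem:expl_pull}. Writing $\delta_j=\mu_{(M)}-\mu_{(j)}$, $\delta_k=\mu_{(k)}-\mu_{(M)}$ and $\Delta=\mu_{(M)}-\mu_{(M+1)}$, these are $\hat{t}_{(j)}\leq \min\{T,\,384\log(T)/\delta_j^2\}$ and $\hat{t}_{(k)}\leq \min\{T,\,384\log(T)/(\delta_k+\Delta)^2\}$. Splitting on the sign of $(\delta_k+\Delta)-\delta_j$ produces a clean per-$k$ bound in either branch: if $\delta_k+\Delta\geq \delta_j$, take $\min=\hat{t}_{(k)}$ and use $\delta_k/(\delta_k+\Delta)^2\leq 1/(\delta_k+\Delta)\leq 1/\delta_j$; if $\delta_k+\Delta<\delta_j$, take $\min=\hat{t}_{(j)}$ and note $\delta_k<\delta_j$, so $\delta_k\hat{t}_{(j)}\leq 384\log(T)/\delta_j$. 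The $\sqrt{T\log(T)}$ branch of the minimum is recovered by swapping the bound $\hat{t}_{(j)}\leq 384\log(T)/\delta_j^2$ for the trivial $\hat{t}_{(j)}\leq T$ whenever $\delta_j<\sqrt{384\log(T)/T}$, at which point $\delta_j\cdot T$ scales as $\sqrt{T\log(T)}$.

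The main obstacle will be keeping the sum over $k\leq M$ from contributing an $M$ factor, since the naive per-$k$ bound from the case analysis above gives $O(M\log(T)/\delta_j)$. To remove this factor I would return to the phase-by-phase representation and swap the order of summation. For every phase $p$ in which $j$ is still active the participating optimal $k$'s must simultaneously satisfy $(\delta_k+\Delta)^2\leq \mathcal{O}(\log(T)/T_{p-1})$, so their gaps $\delta_k$ live in a window that shrinks geometrically in $p$. Pairing this geometric shrinkage with the geometric growth $T_p\leq 3T_{p-1}$ lets the contributions telescope: the cumulative $\sum_{k}\delta_k$ against the phase weights $2^p\lceil\log(T)\rceil\mathds{1}\{T_{p-1}\leq \hat{t}_{(j)}\}$ collapses, via an integration-by-layers step, to a constant multiple of $\sqrt{\log(T)\cdot\hat{t}_{(j)}}\cdot\sqrt{\log(T)}\leq C\log(T)/\delta_j$, independent of $M$. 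Tracking the constants through both stages and taking the minimum with the $\sqrt{T\log(T)}$ branch produces the stated constant $93$.
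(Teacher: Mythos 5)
Your final paragraph converges on essentially the paper's argument, but only after a detour and with the decisive step left unexecuted. The paper never passes through $\sum_{k\leq M}(\mu_{(k)}-\mu_{(M)})\min\{\hat t_{(j)},\hat t_{(k)}\}$; it works directly on the double sum. Since arm $k$ still being active at phase $p$ forces $\mu_{(k)}-\mu_{(M+1)}<\Delta(p):=\sqrt{384\log(T)/T_{p-1}}$, one replaces $\mu_{(k)}-\mu_{(M)}$ by $\Delta(p)$, and then the sum over $k\leq M$ of the indicator is exactly $M_p$, so that $2^p\lceil\log(T)\rceil M_p=T_p-T_{p-1}$ and the whole expression becomes $\sum_{p\leq\kappa_j}\Delta(p)(T_p-T_{p-1})$, which telescopes (using $T_p\leq 3T_{p-1}$) to $384(1+\sqrt{3})\log(T)/\Delta(\kappa_j+1)=O\bigl(\sqrt{\hat t_{(j)}\log T}\bigr)$. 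The counting identity $\sum_{k\leq M}\mathds{1}\{\hat t_{(k)}\geq T_{p-1}\}=M_p$ combined with $M_p2^p\lceil\log(T)\rceil=T_p-T_{p-1}$ is precisely what kills the factor of $M$ you worry about: it converts the sum over optimal arms into time increments. Your sketch gestures at ``integration by layers'' and geometric shrinkage but never states this step, and without it the argument does not close.

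Two further concrete problems. First, your first-stage reduction to $3\sum_{k\leq M}\delta_k\min\{\hat t_{(j)},\hat t_{(k)}\}$ is correct but, as you yourself observe, loses a factor of $M$ and is then abandoned; it plays no role in the final argument. Second, the quantity you produce at the end, $\sqrt{\log(T)\cdot\hat t_{(j)}}\cdot\sqrt{\log(T)}=\log(T)\sqrt{\hat t_{(j)}}$, is \emph{not} bounded by $C\log(T)/\delta_j$: with $\hat t_{(j)}\leq 384\log(T)/\delta_j^2$ it is of order $\log^{3/2}(T)/\delta_j$, a spurious $\sqrt{\log T}$ too large. The telescoping actually yields $C\sqrt{\hat t_{(j)}\log(T)}$ (one factor of $\sqrt{\log T}$ in total, because $1/\Delta(\kappa_j+1)$ already carries a $1/\sqrt{\log T}$), and it is this quantity that gives both branches of the stated minimum via $\hat t_{(j)}\leq\min\{384\log(T)/\delta_j^2,\,T\}$ and produces the constant $93=\lceil 24\sqrt{2}(1+\sqrt{3})\rceil$.
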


	To prove Lemma~\ref{appendix:lem_part2}, let us define $A_j$ as the left hand side of inequality \eqref{eqn:applempart2}.
	From Lemma \ref{lem:expl_pull}, we have
	\begin{equation*}
	    \small
	    \hat{t}_{(k)} \leq \min\left\{\frac{384\log(T)}{(\mu_{(k)}-\mu_{(M+1)})^2},T\right\}.
	\end{equation*}
	Thus if we denote $\Delta(p)=\sqrt{\frac{384\log(T)}{T_{p-1}}}$, the inequity $\hat{t}_{(k)}>T_{p-1}$ implies $\mu_{(k)}-\mu_{(M+1)}<\Delta(p)$. By denoting $\kappa_j$ as the smallest integer such that $\hat{t}_{(j)} \leq T_{\kappa_j}$, it follows:
	{\small
	\begin{align*}
	A_j &\leq \sum_{k\leq M}\sum_{p=1}^{\kappa_j} 2^p \lceil \log(T) \rceil \Delta(p) \mathds{1}\left\{\hat{t}_{(k)}\geq T_{p-1}\right\}\\
	&\leq\sum_{p=1}^{\kappa_j}\Delta(p)2^p\lceil \log(T) \rceil\sum_{k\leq M}\mathds{1}\left\{\hat{t}_{(k)}\geq T_{p-1}\right\} \\
	&= \sum_{p=1}^{\kappa_j}\Delta(p)2^p\lceil \log(T) \rceil M_p\\
	&\leq \sum_{p=1}^{\kappa_j}\Delta(p)(T_p-T_{p-1})\\
	&=384\log(T) \sum_{p=1}^{\kappa_j}\left(\frac{\Delta(p)}{\Delta(p+1)}+1\right)\left(\frac{1}{\Delta(p+1)}-\frac{1}{\Delta(p)}\right)\\
	&\overset{(a)}{\leq} 384(1+\sqrt{3}) \log(T)\sum_{p=1}^{\kappa_j}\left(\frac{1}{\Delta(p+1)}-\frac{1}{\Delta(p)}\right)\\
	&\leq 384(1+\sqrt{3})\log(T)\frac{1}{\Delta(\kappa_j+1)}\\
	&\overset{(b)}{\leq} 24\sqrt{2}(1+\sqrt{3})\sqrt{\hat{t}_{(j)}\log(T)}\\
	&\overset{(c)}{\leq} 24\sqrt{2}(1+\sqrt{3})\min \left \{\frac{8\sqrt{6}\log(T)}{\mu_{(M)}-\mu_{(j)}},\sqrt{T\log(T)} \right\}\\
	&\leq 93\min \left \{\frac{8\sqrt{6}\log(T)}{\mu_{(M)}-\mu_{(j)}},\sqrt{T\log(T)} \right\}
	\end{align*}}
	where inequality (a) is from $\left(\frac{\Delta(p)}{\Delta(p+1)}+1\right)=1+\sqrt{\frac{T_p}{T_{p-1}}}\leq 1+\sqrt{3}$ since $T_{p+1}\leq 3T_p$. {Inequality (b) is supported by $\Delta(\kappa_j+1)\geq \sqrt{\frac{128\log(T)}{\hat{t}_{(j)}}}$ from the observation that $\hat{t}_{(j)}\geq T_{\kappa_j-1}$ by the definition of $\kappa_j$ and $T_{p+1}\leq 3T_{p}$. Inequality (c) is from Lemma \ref{lem:expl_pull} that $\hat{t}_{(j)} \leq \min\left\{\frac{384\log(T)}{(\mu_{(M)}-\mu_{(j)})^2}, T\right\}$.}
\end{proof}
{With Lemmas \ref{appendix:lem_part1} and \ref{appendix:lem_part2}, we can finally get Lemma \ref{lem:expl_decom}.}

\section{Proof of Lemma \ref{lem:comm_regret}}
\label{app:lem:comm_regret}

\begin{proof}
    We denote $H$ as the number of exploration phases. First, the communication loss of sending arm statistics for $p\leq H$ is at most $\sum_{p=1}^H M^2KN'(L_p)$ conditioned on the typical event. Also, sending the cardinality of the accept/reject arm sets incurs a loss of $2M^2N'(\lceil\log_2(K)\rceil)H$. Next, transmitting the content of accepted and rejected arm sets at most incurs a loss of $M^2KN'(\lceil\log_2(K)\rceil)$ because there are at most $K$ arms to be accepted or rejected. Putting them together, the total communication loss is at most $\sum_{p=1}^H M^2KN'(L_p)+2M^2N(\lceil\log_2(K)\rceil)+M^2KN'(\lceil\log_2(K)\rceil)$.
    
    Lemma~\ref{lem:expl_pull} establishes that $T_H = \sum_{p=1}^H M_p 2^p \lceil \log(T) \rceil \leq 3\max_k \{s_k\}\leq\min\left\{\frac{384\log(T)}{(\mu_{(M)}-\mu_{(M+1)})^2},T\right\}$. We thus have an upper bound on $H$ as
	{\small
	\begin{align*}
		H &\leq \log_2\left(\min\left\{\frac{384}{(\mu_{(M)}-\mu_{(M+1)})^2},T\right\}\right)\\
		&\leq 2\log_2 \left(\min\left\{\frac{8\sqrt{6}}{\Delta},\sqrt{T}\right\} \right).
	\end{align*}}%
	Based on this, we can bound $R^{\text{comm}}$ {with $\Delta_c$} as
	{\small
	\begin{align*}
	&R^{\text{comm}}\\
	&\leq \sum_{p=1}^H \left(M^2KN'(L_p)+(2M^2H+M^2K)N'(\lceil\log_2(K)\rceil)\right){\Delta_c}\\
	&\overset{(a)}{\leq}  M^2KHN'(L_H){\Delta_c}+(2M^2H+M^2K)N'(\lceil\log_2(K)\rceil){\Delta_c}\\
	&{\leq} 2\log_2 \left(\min\left\{\frac{8\sqrt{6}}{\Delta},\sqrt{T}\right\} \right)M^2KN'\left(L_H\right){\Delta_c}\\
	&+M^2\left(4\log_2 \left(\min\left\{\frac{8\sqrt{6}}{\Delta},\sqrt{T}\right\} \right)+K\right)N'(\lceil\log_2(K)\rceil){\Delta_c},
	\end{align*}}%
	{where inequality (a) is because $N'(L_p)$ is monotonically increasing with $p$. With $L_H= 1+\left\lceil\log_2\left(\sqrt{\frac{T_H}{2\log(T)}}\right)\right\rceil\leq  1+\left\lceil\log_2\left(\min\left\{\frac{8\sqrt{3}}{\Delta},\sqrt{T}\right\}\right)\right\rceil$, Lemma~\ref{lem:comm_regret} is proved.}
\end{proof}

\bibliographystyle{IEEEtran}
\bibliography{MMABrevise}

\end{document}